\newtheorem{thm}{Theorem}[section]
\newtheorem{lem}[thm]{Lemma}
\newtheorem{prop}[thm]{Proposition}
\newtheorem{rem}[thm]{Remark}
\newtheorem{ass}[thm]{Assumption}
\def\fin   {\hfill{$\Box$}\vspace{5mm}}
\def\l     {\left}
\def\r     {\right}
\def\<     {\langle}
\def\>     {\rangle}
\def\calB  {{\cal B}}
\def\calF  {{\cal F}}
\def\bbC   {{\mathbb C}}
\def\bbE   {{\mathbb E}}
\def\bbF   {{\mathbb F}}
\def\bbP   {{\mathbb P}}
\def\bbR   {{\mathbb R}}
\def\ve    {\varepsilon}
\def\tP    {\bbP^\ast}
\def\tN    {\widetilde{N}}
\begin{document}

\title{Numerical analysis on local risk-minimization
       for exponential L\'{e}vy models}
\author{Takuji Arai\footnote{
        Department of Economics, Keio University, 2-15-45 Mita, Minato-ku,
        Tokyo, 108-8345, Japan, 
        Email: arai@econ.keio.ac.jp}, 
        Yuto Imai\footnote{Department of Mathematics, Waseda University, 3-4-1 Okubo,          Shinjyuku-ku, Tokyo, 169-8555, Japan, E-mail: y.imai@aoni.waseda.jp} and 
        Ryoichi Suzuki\footnote{
        Department of Mathematics, Keio University, 3-14-1 Hiyoshi Kohoku-ku,
        Yokohama, 223-8522, Japan, E-mail:reicesium@gmail.com}
}
\maketitle

\begin{abstract}
We illustrate how to compute local risk minimization (LRM)
of call options for exponential L\'evy models.
We have previously obtained a representation of LRM for call options;
here we transform it into a form that allows use of
the fast Fourier transform method suggested by Carr \& Madan.
In particular, we consider Merton jump-diffusion models and
variance gamma models as concrete applications.
\end{abstract}
{\bf Keywords:} Local risk minimization; Fast Fourier transform;
          Exponential L\'evy processes; Merton jump-diffusion processes;
          Variance gamma processes.

\setcounter{equation}{0}
\section{Introduction}
Local risk minimization (LRM), which has more than 20 years' history,
is a well-known hedging method for contingent claims in incomplete markets.
Although its theoretical aspects have been very well studied,
corresponding computational methods have yet to be thoroughly developed.
This paper aims to illustrate how to numerically calculate
LRM for call options in exponential L\'evy models.
To our knowledge, this contribution is the first to address this subject.
In Arai \& Suzuki \cite{AS}, we obtained a representation of LRM for call options
by using Malliavin calculus for L\'evy processes based
on the canonical L\'evy space.
Here we transform that result into a form that allows 
the fast Fourier transform method suggested by Carr \& Madan \cite{CM} to be applied.
In particular, Merton jump-diffusion and variance gamma models,
being common classes of exponential L\'evy models, are discussed as concrete
applications of our approach.

Consider a financial market composed of one risk-free asset and
one risky asset with finite time horizon $T>0$.
For simplicity, we assume that the market's interest rate is zero,
that is, the price of the risk-free asset is 1 at all times.
The fluctuation of the risky asset is assumed to be described by
an exponential L\'evy process $S$ on a complete probability space
$(\Omega, \calF, \bbP)$,\footnote{
$(\Omega, \calF, \bbP)$ is taken as the product of a one-dimensional Wiener
space and the canonical L\'evy space for $N$.
Moreover, we take $\bbF=\{\calF_t\}_{t\in[0,T]}$ as
the completed canonical filtration for $\bbP$.
For more details on the canonical L\'evy space, see \cite{S07} and \cite{AS}.}
described by
\[
S_t:=S_0\exp\l\{\mu t+\sigma W_t+\int_{\bbR_0}x\tN([0,t],dx)\r\} \quad
\mbox{ for }t\in[0,T]\,,
\]
where $S_0>0$, $\mu\in\bbR$, $\sigma>0$, and $\bbR_0:=\bbR\setminus\{0\}$.
Here $W$ is a one-dimensional Brownian motion
and $\tN$ is the compensated version of a Poisson random measure $N$.
Denoting the L\'evy measure of $N$ by $\nu$,
we have $\tN([0,t],A)=N([0,t],A)-t\nu(A)$ for any $t\in[0,T]$ and
$A\in\calB(\bbR_0)$.
Moreover, $S$ is also a solution to the stochastic differential
equation
\[
dS_t=S_{t-}\bigg[\mu^S\,dt+\sigma \,dW_t+\int_{\bbR_0}(e^x-1)\tN(dt,dx)\bigg],
\]
where $\mu^S:=\mu+\frac{1}{2}\sigma^2 +\int_{\bbR_0}(e^x-1-x)\nu(dx)$.
Without loss of generality, we may assume that $S_0=1$ for simplicity.
Now, defining $L_t:=\log S_t$ for all $t\in[0,T]$,
we obtain a L\'evy process $L$.
Moreover, 
$dM_t:=S_{t-}\big[\sigma \,dW_t+\int_{\bbR_0}(e^x-1)\tN(dt,dx)\big]$
is the martingale part of $S$.

Our focus is the development of a computational method
for LRM with respect to a call option $(S_T-K)^+$ with strike price $K>0$.
We do not review the definition of LRM in this paper; for details, see
Schweizer (\cite{Sch}, \cite{Sch3}).
We first briefly introduce the explicit LRM representation of such options
in exponential L\'evy models given in \cite{AS}.

Define the minimal martingale measure $\tP$ as
an equivalent martingale measure under which any square-integrable
$\bbP$-martingale orthogonal to $M$ remains a martingale.
Its density is then given by
\[
\frac{d\tP}{d\bbP}
=\exp\l\{-\xi W_T-\frac{\xi^2}{2}T+\int_{\bbR_0}\log(1-\theta_x)N([0,T],dx)
 +T\int_{\bbR_0}\theta_x\nu(dx)\r\},
\]
where
\[
\xi:=\frac{\mu^S\sigma}{\sigma^2+\int_{\bbR_0}(e^y-1)^2\nu(dy)}
\quad\mbox{ and }\quad
\theta_x:=\frac{\mu^S(e^x-1)}{\sigma^2+\int_{\bbR_0}(e^y-1)^2\nu(dy)}
\]
for $x\in\bbR_0$.
In the development of our approach, we rely on the following:

\begin{ass}
\label{ass-1}
\begin{enumerate}
\item $\int_{\bbR_0}(|x|\vee x^2)\nu(dx)<\infty$, and
      $\int_{\bbR_0}(e^x-1)^n\nu(dx)<\infty$ for $n=2,4$.
\item $0\geq\mu^S>-\sigma^2-\int_{\bbR_0}(e^x-1)^2\nu(dx)$.
\end{enumerate}
\end{ass}

\noindent
The first condition ensures that $\mu^S$, $\xi$, and $\theta_x$
are well defined, the square integrability of $L$, and
the finiteness of $\int_{\bbR_0}(e^x-1)^n\nu(dx)$ for $n=1,3$.
The second guarantees that $\theta_x<1$ for any $x\in\bbR_0$.
Moreover, by the Girsanov theorem, $W_t^{\tP}:=W_t+\xi t$ and
$\tN^{\tP}([0,t],dx):=\theta_x\nu(dx)t+\tN([0,t],dx)$
are a $\tP$-Brownian motion and the compensated Poisson random measure of $N$
under $\tP$, respectively.
We can then rewrite $L_t$ as
$L_t=\mu^* t+\sigma W_t^{\tP}+\int_{\bbR_0}x\tN^{\tP}([0,t],dx)$,
where $\mu^*:=-\frac{1}{2}\sigma^2 +\int_{\bbR_0}(x-e^x+1)(1-\theta_x)\nu(dx)$.
Note that $L$ is a L\'evy process even under $\tP$,
with L\'evy measure given by $\nu^{\tP}(dx):=(1-\theta_x)\nu(dx)$.
The LRM will be given as a predictable process $LRM_t$,
which represents the number of units of the risky asset the investor holds
at time $t$.
First, we define
\begin{align}
\label{eq-def-I1}
I_1&:= \bbE_{\tP}[{\bf 1}_{\{S_T>K\}}S_T\mid\calF_{t-}]\,, \\
\label{eq-def-I2}
I_2&:= \int_{\bbR_0}\bbE_{\tP}[(S_Te^x-K)^+-(S_T-K)^+\mid\calF_{t-}]
       (e^x-1)\nu(dx)\,.
\end{align}
Our explicit representation of LRM for call option $(S_T-K)^+$
is then as follows:

\begin{prop}[Proposition 4.6 of \cite{AS}]
\label{prop-AS}
For any $K>0$ and $t\in[0,T]$, 
\begin{equation}
\label{eq-prop-AS}
LRM_t=\frac{\sigma^2I_1+I_2}{S_{t-}\big(\sigma^2
      +\int_{\bbR_0}(e^x-1)^2\nu(dx)\big)}.
\end{equation}
\end{prop}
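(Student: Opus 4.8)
The plan is to derive \eqref{eq-prop-AS} from the standard characterisation of the LRM strategy under the minimal martingale measure. The first step is to recall (see Schweizer \cite{Sch}, \cite{Sch3}) that, since $\tP$ is the minimal martingale measure and $(S_T-K)^+\in L^2(\tP)$ under Assumption \ref{ass-1}, the LRM strategy coincides with the integrand in the F\"{o}llmer--Schweizer decomposition of the claim and admits the representation
\[
LRM_t=\frac{d\langle\widehat V,M\rangle_t}{d\langle M\rangle_t},
\]
where $\widehat V_t:=\bbE_{\tP}[(S_T-K)^+\mid\calF_t]$ is the value process under $\tP$ and the sharp brackets denote predictable covariations \emph{under} $\bbP$. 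The orthogonality $\langle L^H,M\rangle=0$ built into the definition of $\tP$ is exactly what legitimises this formula, so the problem reduces to computing the two covariations explicitly.

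Next I would exploit the Markov structure. Because $L$ is a L\'evy process under $\tP$, we have $\widehat V_t=F(t,S_t)$ with $F(t,s):=\bbE_{\tP}\big[(s\,e^{L_T-L_t}-K)^+\big]$, the increment $L_T-L_t$ being independent of $\calF_t$. Applying the It\^o formula for semimartingales with jumps to $F(t,S_t)$ and collecting the martingale parts, the continuous martingale part of $\widehat V$ is $\sigma S_{t-}\frac{\partial F}{\partial s}(t,S_{t-})\,dW_t$ and its purely discontinuous martingale part is $\int_{\bbR_0}\big[F(t,S_{t-}e^x)-F(t,S_{t-})\big]\tN(dt,dx)$.

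With these in hand I would compute the covariations under $\bbP$. Since $dM_t=S_{t-}\big[\sigma\,dW_t+\int_{\bbR_0}(e^x-1)\tN(dt,dx)\big]$ and $N$ has $\bbP$-compensator $\nu(dx)\,dt$,
\[
d\langle M\rangle_t=S_{t-}^2\Big(\sigma^2+\int_{\bbR_0}(e^x-1)^2\nu(dx)\Big)dt,
\]
which is the denominator. Matching continuous parts and compensating the jump covariation gives
\[
d\langle\widehat V,M\rangle_t=\Big[\sigma^2 S_{t-}^2\tfrac{\partial F}{\partial s}(t,S_{t-})+S_{t-}\!\int_{\bbR_0}\!\big(F(t,S_{t-}e^x)-F(t,S_{t-})\big)(e^x-1)\nu(dx)\Big]dt.
\]
The final step is to identify the two bracketed terms with $I_1$ and $I_2$: differentiating $F$ under the $\tP$-expectation yields $S_{t-}\frac{\partial F}{\partial s}(t,S_{t-})=\bbE_{\tP}[{\bf 1}_{\{S_T>K\}}S_T\mid\calF_{t-}]=I_1$, while $F(t,S_{t-}e^x)-F(t,S_{t-})=\bbE_{\tP}[(S_Te^x-K)^+-(S_T-K)^+\mid\calF_{t-}]$, so the jump integral equals $I_2$. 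Substituting and cancelling one factor $S_{t-}$ produces \eqref{eq-prop-AS}.

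The hard part will not be the algebra but the analytic justifications underlying it: one must verify that $F(t,\cdot)$ is smooth enough in $s$ and that differentiation may be interchanged with the $\tP$-expectation (so that $\frac{\partial F}{\partial s}$ has the stated probabilistic form), and that the resulting continuous and jump integrals are genuine square-integrable $\bbP$-martingales, so that the covariation representation is applicable. This is where the integrability hypotheses of Assumption \ref{ass-1} --- the $L^2$-property of $L$ and the finiteness of $\int_{\bbR_0}(e^x-1)^n\nu(dx)$ for $n=1,2,3,4$ --- do the real work, together with the careful replacement of $\calF_t$, $S_t$ by their predictable versions $\calF_{t-}$, $S_{t-}$ inside the conditional expectations.
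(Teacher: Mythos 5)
Your derivation is essentially correct in outline, but it is a genuinely different route from the one behind this statement. Note first that the present paper does not prove Proposition~\ref{prop-AS} at all: it is quoted from Proposition~4.6 of \cite{AS}, where it is established by Malliavin calculus for L\'evy processes on the canonical L\'evy space (a Clark--Ocone-type representation under the measure change to $\tP$), not by the bracket computation you propose. Your argument --- identify $LRM_t$ with the F\"ollmer--Schweizer integrand $d\langle\widehat V,M\rangle_t/d\langle M\rangle_t$ under $\bbP$, use the L\'evy--Markov structure to write $\widehat V_t=F(t,S_t)$, apply It\^o's formula, and read off $\sigma^2 I_1$ and $I_2$ from the continuous and compensated-jump covariations --- is the classical alternative, and your identifications $S_{t-}\partial_sF(t,S_{t-})=I_1$ and $F(t,S_{t-}e^x)-F(t,S_{t-})=\bbE_{\tP}[(S_Te^x-K)^+-(S_T-K)^+\mid\calF_{t-}]$ are exactly right, as is the denominator $d\langle M\rangle_t$. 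What the two approaches buy is different. Yours is more elementary and transparent, but its burden sits precisely where you flag it: one must know that $F(t,\cdot)$ is regular enough for It\^o's formula and that $\partial_s$ commutes with $\bbE_{\tP}$, and one must check that the resulting decomposition terms are square-integrable so that the bracket characterisation of the (pseudo-)locally risk-minimizing strategy applies. This is delicate exactly in the regime the paper cares about in Section~4: for variance gamma models $\sigma=0$, the claim $(S_T-K)^+$ is not smoothed by a Gaussian component, and the $C^{1,2}$-regularity of $F$ near maturity is not free, so your It\^o step needs a nontrivial substitute there. The Malliavin/Clark--Ocone argument of \cite{AS} avoids the Markovian smoothness issue entirely, treats $\sigma\geq 0$ uniformly, and is where the hypotheses of Assumption~\ref{ass-1} (in particular $\int_{\bbR_0}(e^x-1)^4\nu(dx)<\infty$) are actually used to get the required integrability. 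So: correct strategy, genuinely different from the source proof, with the analytic gaps you acknowledge being real rather than routine --- closing them in the $\sigma=0$ case is the main reason the original proof takes the Malliavin route.
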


\begin{rem}
\begin{enumerate}
\item
The assumption $\int_{\bbR_0}(e^x-1)^4\nu(dx)<\infty$ is imposed
in Proposition~4.6 of \cite{AS}.
\item
If the interest rate of our market is instead $r>0$, then
(\ref{eq-prop-AS}) becomes
\[
LRM_t=e^{-r(T-t)}\frac{\sigma^2I_1+I_2}{S_{t-}\big(\sigma^2
      +\int_{\bbR_0}(e^x-1)^2\nu(dx)\big)},
\]
and $\tP$ is rewritten with $\xi$ and $\theta_x$ becoming
$\frac{(\mu^S-r)\sigma}{\sigma^2+\int_{\bbR_0}(e^y-1)^2\nu(dy)}$ \\
and $\frac{(\mu^S-r)(e^x-1)}{\sigma^2+\int_{\bbR_0}(e^y-1)^2\nu(dy)}$,
respectively.
Moreover, the second condition in Assumption~\ref{ass-1} would be revised to
$0\geq\mu^S-r>-\sigma^2-\int_{\bbR_0}(e^x-1)^2\nu(dx)$.
That is, a nonzero $r$ requires only that we replace $\mu$ with $\mu-r$
and multiply the the expression for $LRM_t$ by $e^{-r(T-t)}$,
which means that we can easily generalize results for the $r=0$ case to
those for $r>0$.
For simplicity, in this paper we treat only the case $r=0$.
\end{enumerate}
\end{rem}

From the point of view of Proposition~\ref{prop-AS}, we have to calculate
conditional expectations of functionals of $S_T$ under $\tP$
in order to calculate $LRM_t$ numerically.
However, there does not appear to be any straightforward way to specify
the probability density function of $S_T$ (or equivalently $L_T$) under $\tP$.
Instead, since $L$ is a L\'evy process,
it may be comparatively easy to specify its characteristic function
under $\tP$.
Hence, a numerical method based on the Fourier transform is appropriate
for computing LRM.
Moreover, Carr \& Madan \cite{CM} introduced a numerical method for valuing options
based on the fast Fourier transform (FFT).
We take advantage of this to develop a numerical method for LRM.
To this end, we induce integral expressions for $I_1$ and $I_2$
in terms of the characteristic function of $L_{T-t}$ under $\tP$
and recast them into a form that allows
the Carr--Madan approach to be applied.
In particular, $I_2$ will be given as a linear combination of 
Fourier transforms.

In this paper, we consider two concrete exponential L\'evy processes for $L$.
The first is a jump-diffusion process
as introduced by Merton \cite{Mer}.\footnote{
Merton \cite{Mer} also suggested a hedging method for these models,
but this is is different from LRM.
For additional details, see Section~10.1 of \cite{ConT}.}
This consists of a Brownian motion and compound Poisson jumps
with normally distributed jump sizes.
The second is a variance gamma process,
which is a L\'evy process with infinitely many jumps in any finite time
interval and no Brownian component.
This was introduced by \cite{MS} and
can be defined as a time-changed Brownian motion.
Many papers (e.g., \cite{CM}, \cite{MCC}) have studied it in the context of asset prices.
Schoutens \cite{Scho} provides more details on these two L\'evy processes and
more examples of exponential L\'evy models.

There is great deal of literature on numerical experiments related to LRM
(e.g., \cite{BLOS}, \cite{ENS}, \cite{KL}, \cite{LS}, \cite{LVV}, \cite{YEH} ),
but to our knowledge, ours is the first attempt to develop
an FFT-based numerical LRM scheme for exponential L\'evy models.
K\'{e}lani \& Quittard-Pinon \cite{KO} studied an optimal hedging strategy that they
call $\theta$-\textit{hedging}, which is similar to
but different from LRM, for exponential L\'evy models, and adopted
a Fourier transform approach separate from Carr \& Madan \cite{CM}'s method.
As an important difference, they assumed that $S$ is a martingale
under the underlying probability measure.
In contrast, we do not make this assumption.
We therefore need to treat $S$ under $\tP$, that is, calculate conditional
expectations of functionals of $S$ under $\tP$.
However, the structure of $S$ is no longer preserved under a change of measure.
For example, when $L$ is a variance gamma process under $\bbP$,
it is not so under $\tP$.
Thus, our setting is more challenging but also more natural.

The rest of this paper is organized as follows:
An introductory review of the Carr--Madan approach is given in Subsection~2.1,
and the integral representations of $I_1$ and $I_2$ are presented
in Subsection~2.2.
Merton jump-diffusion models are examined in Section~3, which starts
with mathematical preliminaries and proceeds to numerical results.
Section~4 is similarly devoted to variance gamma models.

\section{Preliminaries}
\subsection{Numerical method}
We briefly review the Carr--Madan approach,
which is an FFT-based numerical approach for option pricing.
The FFT, introduced by \cite{CT},
is a numerical method for computing a discrete Fourier transform given by
\begin{equation}
\label{eq-FFT}
F(l):=\sum^{N-1}_{j=0}e^{-i(2\pi/N)jl}x_j
\end{equation}
for $l=0,\ldots,N-1$,
where $\{x_j\}_{j=0,\ldots,N-1}$ is a sequence on $\bbR$ and where
$N$ is typically a power of $2$.
The FFT requires only $O(N\log_2N)$ arithmetic operations,
as compared with the usual Fourier transform method's
$O(N^2)$.

The aim of the Carr--Madan approach is efficient calculation of
$\bbE[(S_T-K)^+]$ when $S$ is a $\bbP$-martingale.
Recall that we are considering only the case in which the interest rate
is zero.
Denoting $k:=\log K$ and $C(k):=\bbE[(S_T-e^k)^+]$,
we have
\begin{equation}
\label{eq-CM}
C(k)=\frac{1}{\pi}\int_0^\infty e^{-i(v-i\alpha)k}
\frac{\phi(v-i\alpha-i)}{i(v-i\alpha)[i(v-i\alpha)+1]}\,dv
\end{equation}
for $\alpha>0$ with $\bbE[S_T^{\alpha+1}]<\infty$,
where $\phi$ is the characteristic function of $L_T$.
Note that the right-hand side of (\ref{eq-CM}) is independent of
the choice of $\alpha$.
Now, we denote $\psi(z):=\frac{\phi(z-i)}{iz(iz+1)}$ for $z\in\bbC$.
Using the trapezoidal rule, we can therefore approximate $C(k)$ as
\begin{equation}
\label{eq-CM2}
C(k)\approx\frac{1}{\pi}\sum^{N-1}_{j=0}e^{-i(\eta j-i\alpha)k}
           \psi(\eta j-i\alpha)\eta\,,
\end{equation}
where $N$ represents the number of grid points
and $\eta>0$ is the distance between adjacent grid points.
The right-hand side of (\ref{eq-CM2}) corresponds to the integral
in (\ref{eq-CM}) over the interval $[0,N\eta]$,
so we need to specify $N$ and $\eta$ such that
\begin{equation}
\label{eq-CM3}
\l|\frac{1}{\pi}\int_{N\eta}^\infty e^{-i(v-i\alpha)k}\psi(v-i\alpha)dv\r|<\ve
\end{equation}
for a sufficiently small value $\ve>0$,
which represents the allowable error.
By incorporating Simpson's rule weightings,
we may rewrite (\ref{eq-CM2}) as
\[
C(k)\approx \frac{1}{\pi}\sum^{N-1}_{j=0}e^{-i(\eta j-i\alpha)k}
            \psi(\eta j-i\alpha)\frac{\eta}{3}(3+(-1)^{j+1}-\delta_j),
\]
where $\delta_j$ is the Kronecker delta function.
We define
\[
F(l):=\frac{e^{-\alpha k}}{\pi}\sum^{N-1}_{j=0}e^{-i\frac{2\pi}{N}jl}
      e^{i\pi j}\psi(\eta j-i\alpha)\frac{\eta}{3}(3+(-1)^{j+1}-\delta_j)
\]
for $l=0,\ldots,N-1$,
which is a discrete Fourier transform as given in (\ref{eq-FFT}). This yields
\[
C(k)\approx F\l(\!\l(k+\frac{\pi}{\eta}\r)\frac{N\eta}{2\pi}\r).
\]
So long as we take $\eta$ so that $|k|<\pi/\eta$,
we can employ the FFT to compute $C(k)$.

\subsection{Integral representations}
We next induce integral expressions for $I_1$ and $I_2$,
defined in (\ref{eq-def-I1}) and (\ref{eq-def-I2}),
and evolve them so that the Carr--Madan approach
is available.
Recall that Assumption~\ref{ass-1} applies throughout.
As can be seen from Subsection~2.1, if $I_1$ and $I_2$ 
are represented in the same form as (\ref{eq-CM})
we can compute them by means of the Carr--Madan approach.
Because the conditional expectations appearing in $I_1$ and $I_2$ are
under $\tP$,
the functions corresponding to $\psi$ in (\ref{eq-CM}) should include
the characteristic function of $L_{T-t}$ under $\tP$,
denoted by $\phi_{T-t}(z):=\bbE_{\tP}[e^{izL_{T-t}}]$ for $z\in\bbC$.

First, we induce an integral representation for
$I_1$ (${=}\bbE_{\tP}[{\bf 1}_{\{S_T>K\}}S_T\mid\calF_{t-}]$) with $\phi_{T-t}$
by using Proposition~2 from \cite{Tankov}:

\begin{prop}
\label{prop-IR1}
For $K>0$,
\begin{equation}
\label{eq-prop-IR1}
\bbE_{\tP}[{\bf 1}_{\{S_T>K\}}\cdot S_T\mid\calF_{t-}]
= \frac{1}{\pi}\int_0^\infty\frac{K^{-iv-\alpha+1}}{\alpha-1+iv}
  \phi_{T-t}(v-i\alpha)S_{t-}^{\alpha+iv}\,dv
\end{equation}
for all $t\in[0,T]$ and $\alpha\in(1,2]$.
Note that the right-hand side is independent of the choice of $\alpha$.
\end{prop}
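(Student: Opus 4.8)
The plan is to strip the conditional expectation down to a deterministic function of the log-strike, using the fact that $L$ is a L\'evy process under $\tP$, and then to recover that function by inverting a damped Fourier transform whose transform is exactly $\phi_{T-t}(v-i\alpha)/(\alpha-1+iv)$.

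First I would exploit the independent-increments structure. Set $X:=L_T-L_{t-}$; at the fixed time $t$ we have $L_{t-}=L_t$ a.s., so $X=L_T-L_t$ is independent of $\calF_{t-}$ and has the law of $L_{T-t}$ under $\tP$, whence $\bbE_{\tP}[e^{izX}]=\phi_{T-t}(z)$. Since $S_T=S_{t-}e^{X}$ with $S_{t-}$ being $\calF_{t-}$-measurable, the event $\{S_T>K\}$ is $\{X>\kappa\}$ with $\kappa:=\log(K/S_{t-})$, and the freezing lemma gives
\[
\bbE_{\tP}[{\bf 1}_{\{S_T>K\}}S_T\mid\calF_{t-}]=S_{t-}\,h(\kappa),\qquad h(y):=\bbE_{\tP}\big[e^{X}{\bf 1}_{\{X>y\}}\big].
\]
It therefore suffices to find a Fourier representation of the deterministic function $h$ and substitute $y=\kappa$.

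Next I would compute the damped transform of $h$. Because $h(y)\to\bbE_{\tP}[e^{X}]$ as $y\to-\infty$, the raw function is not integrable, so I multiply by $e^{(\alpha-1)y}$ with $\alpha\in(1,2]$; Assumption~\ref{ass-1}(1) is precisely what makes $\phi_{T-t}(v-i\alpha)=\bbE_{\tP}[e^{(\alpha+iv)X}]$ finite for such $\alpha$. Writing $h(y)=\int_{\{x>y\}}e^{x}\,\bbP_X(dx)$, where $\bbP_X$ is the law of $X$ under $\tP$, and interchanging the $dy$-integral with the expectation over the region $\{(y,x):x>y\}$ (legitimate by Tonelli, the undamped integrand being nonnegative with $\tfrac{1}{\alpha-1}\bbE_{\tP}[e^{\alpha X}]<\infty$ as its total mass), the inner integral $\int_{-\infty}^{x}e^{(\alpha-1+iv)y}\,dy=\frac{e^{(\alpha-1+iv)x}}{\alpha-1+iv}$ converges exactly because $\alpha>1$, and I obtain
\[
\int_{-\infty}^{\infty}e^{ivy}e^{(\alpha-1)y}h(y)\,dy=\frac{1}{\alpha-1+iv}\,\bbE_{\tP}\big[e^{(\alpha+iv)X}\big]=\frac{\phi_{T-t}(v-i\alpha)}{\alpha-1+iv}.
\]
Fourier inversion then yields $h(y)=\frac{1}{2\pi}\int_{-\infty}^{\infty}\frac{e^{(1-\alpha-iv)y}}{\alpha-1+iv}\phi_{T-t}(v-i\alpha)\,dv$. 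To fold this onto $[0,\infty)$ I use that $X$ is real-valued, so $\phi_{T-t}(-v-i\alpha)=\overline{\phi_{T-t}(v-i\alpha)}$; the full integrand $F(v)$ then satisfies $F(-v)=\overline{F(v)}$, giving $\int_{-\infty}^{\infty}F(v)\,dv=2\,\mathrm{Re}\int_0^\infty F(v)\,dv$, so the prefactor passes from $\tfrac{1}{2\pi}$ to $\tfrac1\pi$ and, $h$ being real, the $\mathrm{Re}$ may be dropped. Substituting $y=\kappa$, using $e^{(1-\alpha-iv)\kappa}=K^{1-\alpha-iv}S_{t-}^{\alpha-1+iv}$, and multiplying by $S_{t-}$ produces the factor $S_{t-}^{\alpha+iv}$ and the exact right-hand side of (\ref{eq-prop-IR1}); the $\alpha$-independence is inherited from the left-hand side, equivalently from Cauchy's theorem on the strip where $z\mapsto\phi_{T-t}(z)$ is analytic.

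The main obstacle I expect is analytic rather than algebraic: justifying the Fourier inversion requires integrability of $v\mapsto\phi_{T-t}(v-i\alpha)/(\alpha-1+iv)$, and the decay of $\phi_{T-t}$ differs sharply between the two models treated later (Gaussian decay in the Merton case versus only polynomial decay for variance gamma). The factor $1/(\alpha-1+iv)$ contributes one power of $v^{-1}$, but verifying absolute integrability at infinity in the driftless variance-gamma case, together with checking that $\phi_{T-t}(v-i\alpha)$ is finite for every $\alpha\in(1,2]$ from the moment conditions, is where the genuine work lies. A cleaner route is to bypass the direct inversion altogether by applying Proposition~2 of \cite{Tankov} to the unconditional quantity $h$, which packages precisely this damped-Fourier argument.
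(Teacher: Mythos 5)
Your route is the same as the paper's: the reduction $X:=L_T-L_t$, independent of $\calF_{t-}$ and distributed as $L_{T-t}$ under $\tP$ (using $L_{t-}=L_t$ a.s.), is exactly the paper's Lemma~\ref{lem-IR}, and your damped-Fourier computation is a hand-unpacking of Proposition~2 of Tankov, which the paper cites directly --- indeed your closing remark that one could ``bypass the direct inversion'' via Tankov describes precisely what the paper does. The algebra is correct throughout: the Tonelli interchange giving $\int_{\bbR}e^{(\alpha-1+iv)y}h(y)\,dy=\phi_{T-t}(v-i\alpha)/(\alpha-1+iv)$, the conjugate-symmetry folding onto $[0,\infty)$, and the substitution $y=\log(K/S_{t-})$ all reproduce the paper's formula.

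The genuine gap is that the analytic hypotheses legitimizing the inversion are named but never verified, and verifying them is the actual substance of the paper's proof (its conditions (a)--(d)). First, the moment $\bbE_{\tP}[e^{\alpha L_{T-t}}]<\infty$ does \emph{not} follow ``precisely'' from Assumption~\ref{ass-1}(1) as you assert: that assumption constrains $\nu$ under $\bbP$, whereas the moment you need is under $\tP$, whose L\'evy measure is $(1-\theta_x)\nu(dx)$. The paper bridges the two by computing $\int_{\bbR_0}(e^x-1)^2\nu^{\tP}(dx)=\int_{\bbR_0}(e^x-1)^2\nu(dx)+\frac{|\mu^S|}{\sigma^2+\int_{\bbR_0}(e^y-1)^2\nu(dy)}\int_{\bbR_0}(e^x-1)^3\nu(dx)<\infty$ (the cubic term being finite by the remark following Assumption~\ref{ass-1}) and then invoking Theorem~117 of Situ to conclude $\sup_{t\in[0,T]}|S_t|\in L^2(\tP)$. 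Second, the $L^1$ bound on $v\mapsto\phi_{T-t}(v-i\alpha)/(\alpha-1+iv)$, which you defer as ``where the genuine work lies'' and frame as delicate because of the variance gamma case, is in fact immediate in the setting of this proposition: it lives under the standing assumption $\sigma>0$ of Sections~1--2 (in the variance gamma section $\sigma=0$, but there $I_1$ enters the LRM formula multiplied by $\sigma^2=0$ and is never used), so factoring $\phi_{T-t}$ into drift, Brownian, and jump parts yields $|\phi_{T-t}(v-i\alpha)|\le C e^{-\sigma^2v^2(T-t)/2}$, and this Gaussian bound is exactly how the paper closes the argument. Third, pointwise validity of the inversion at the specific point $y=\log(K/S_{t-})$ needs either finite variation of the damped function (Tankov's condition (a)) or its continuity there, which you do not address. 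As written, your proposal is a correct skeleton of the paper's proof with its analytic core left as an acknowledged to-do, and with a misdiagnosis of where that core is easy and where it is hard.
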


\begin{proof}
Define $G(x):={\bf 1}_{\{x>K\}}\cdot x$,
$g(x):=G(e^x)$ for any $x\in\bbR$,
and $\hat{g}(z):=\int_{\bbR}e^{izx}g(x)dx$ for any $z\in\bbC$.
We employ one lemma:

\begin{lem}
\label{lem-IR}
Let $L^\prime$ be an independent copy of $L$.
Then, $L^\prime_{T-t}+L_{t-}\overset{\tP\mbox{-}d}{=}L_T$
for all $t\in [0,T]$,
where $A\overset{\tP\mbox{-}d}{=}B$ means that $A=B$ in law for $\tP$.
\end{lem}

\noindent
{\bf Proof of Lemma \ref{lem-IR}.}
Proposition~I.7 of \cite{Ber} implies that $\tP(L_{t-}=L_t)=1$.
Therefore, $L_t\overset{\tP\mbox{-}d}{=}L_{t-}$.
Because L\'evy processes have independent and stationary increments, we have
$L_T=L_T-L_t+L_t\overset{\tP\mbox{-}d}{=}L^\prime_{T-t}+L_t$.
\fin

\noindent
Returning to the proof of Proposition~\ref{prop-IR1},
from Lemma~\ref{lem-IR} we have
\begin{align*}
\bbE_{\tP}[{\bf 1}_{\{S_T>K\}}\cdot S_T\mid\calF_{t-}]
&= \bbE_{\tP}[G(S_T)\mid\calF_{t-}]
   =\bbE_{\tP}[g(L^\prime_{T-t}+L_{t-})\mid\calF_{t-}] \\
&= \int_{\bbR}g(x+L_{t-})p(dx),
\end{align*}
where $p(A):=\tP(L^\prime_{T-t}\in A)$ for any $A\in\calB(\bbR)$.
By (22)--(25) in the proof of Proposition~2 of \cite{Tankov},
if any $\alpha\in(1,2]$ satisfies the conditions that
\begin{description}
 \item[(a)] $g(x)e^{-\alpha x}$ has finite variation on $\bbR$,
 \item[(b)] $g(x)e^{-\alpha x}\in L^1(\bbR)$,
 \item[(c)] $\bbE_{\tP}[e^{\alpha L_{T-t}}]<\infty$, and
 \item[(d)] $\int_{\bbR}\frac{|\phi_{T-t}(v-i\alpha)|}{1+|v|}dv<\infty$,
\end{description}
then
\[
\int_{\bbR}g(x+L_{t-})p(dx)=\frac{1}{2\pi}\int_{\bbR}\hat{g}(v+i\alpha)
                            \phi_{T-t}(-v-i\alpha)S_{t-}^{\alpha-iv}\,dv
\]
for $\alpha\in(1,2]$, which is independent of the choice of $\alpha$.
As a result, under conditions (a)--(d), we have
\begin{align*}
\bbE_{\tP}[{\bf 1}_{\{S_T>K\}}\cdot S_T\mid\calF_{t-}]
&= \frac{1}{2\pi}\int_{\bbR}\hat{g}(v+i\alpha)
   \phi_{T-t}(-v-i\alpha)S_{t-}^{\alpha-iv}\,dv \\
&= \frac{1}{\pi}\int_0^\infty\hat{g}(-v+i\alpha)
   \phi_{T-t}(v-i\alpha)S_{t-}^{\alpha+iv}\,dv \\
&= \frac{1}{\pi}\int_0^\infty\frac{K^{-iv-\alpha+1}}{\alpha-1+iv}
   \phi_{T-t}(v-i\alpha)S_{t-}^{\alpha+iv}\,dv.
\end{align*}

We need only to confirm that conditions (a)--(d) hold.
Conditions (a) and (b) are obvious.
To demonstrate condition (c),
it suffices to show $S_{T-t}\in L^2(\tP)$ for any $t\in[0,T]$.
Note that we have
\begin{eqnarray*}
\lefteqn{\int_{\bbR_0}(e^x-1)^2\nu^{\tP}(dx)} \\
&\qquad =& \int_{\bbR_0}(e^x-1)^2\nu(dx)
    +\frac{|\mu^S|}{\sigma^2+\int_{\bbR_0}(e^x-1)^2\nu(dx)}
    \int_{\bbR_0}(e^x-1)^3\nu(dx)
    < \infty\,.
\end{eqnarray*}
Because $S$ is a solution to
$dS_t=S_{t-}(\sigma \,dW_t^{\tP}+\int_{\bbR}(e^x-1)\tN^{\tP}(dt,dx))$,
Theorem~117 of \cite{Situ} implies that $\sup_{t\in [0,T]}|S_t|\in L^2(\tP)$.

Next, we show condition (d).
Note that{\small
\begin{eqnarray}
\lefteqn{\phi_{T-t}(v-i\alpha)} \nonumber \\
&\qquad =&\bbE_{\tP}\l[\exp\l\{(iv+\alpha)\l[\mu^*(T-t)+\sigma W_{T-t}^{\tP}
    +\int_{\bbR_0}x\tN^{\tP}([0,T-t],dx)\r]\r\}\r]. \nonumber \\
\label{eq-prop-IR1-2}
\end{eqnarray}}
For the right-hand side, we have
\begin{eqnarray}
\label{eq-prop-IR1-3}
\lefteqn{\l|\bbE_{\tP}\l[\exp\l\{(iv+\alpha)
\int_{\bbR_0}x\tN^{\tP}([0,T-t],dx)\r\}\r]\r|} \nonumber \\
&\quad\leq& \bbE_{\tP}\l[\exp\l\{\alpha
       \int_{\bbR_0}x\tN^{\tP}([0,T-t],dx)\r\}\r]
       <\infty\,,
\end{eqnarray}
because
\begin{align*}
\bbE_{\tP}\big[e^{\alpha L_{T-t}}\big]
&= \bbE_{\tP}\l[\exp\l\{\alpha\l[\mu^*(T-t)+\sigma W_{T-t}^{\tP}
    +\int_{\bbR_0}x\tN^{\tP}([0,T-t],dx)\r]\r\}\r] \\
&= e^{\mu^*(T-t)}\bbE_{\tP}\big[e^{\alpha\sigma W_{T-t}^{\tP}}\big]
    \bbE_{\tP}\l[e^{\alpha\int_{\bbR_0}x\tN^{\tP}([0,T-t],dx)}\r],
\end{align*}
$\bbE_{\tP}\big[e^{\alpha\sigma W_{T-t}^{\tP}}\big]
=\exp\l\{\frac{1}{2}\alpha^2\sigma^2(T-t)\r\}$,
and $\bbE_{\tP}\big[e^{\alpha L_{T-t}}\big]<\infty$.
In addition, we obtain
\begin{equation}
\label{eq-prop-IR1-4}
\l|\bbE_{\tP}[\exp\{(iv+\alpha)\sigma W^{\tP}_{T-t}\}]\r|
=\exp\l\{\frac{(\alpha^2-v^2)\sigma^2(T-t)}{2}\r\}.
\end{equation}
As a result, we have from (\ref{eq-prop-IR1-2})--(\ref{eq-prop-IR1-4})
\[
\int_{\bbR}\frac{|\phi_{T-t}(v-i\alpha)|}{1+|v|}dv
<C\int_{\bbR}\frac{1}{1+|v|}\exp\l\{-\frac{\sigma^2(T-t)}{2}v^2\r\}dv<\infty
\]
for some $C>0$.
This completes the proof of Proposition~\ref{prop-IR1}.
\end{proof}

\noindent
We evolve (\ref{eq-prop-IR1}) into the same form as (\ref{eq-CM}) as follows:
\begin{align}
I_1=\bbE_{\tP}[{\bf 1}_{\{S_T>K\}}\cdot S_T\mid\calF_{t-}]
&= \frac{1}{\pi}\int_0^\infty\frac{K^{-iv-\alpha+1}}{\alpha-1+iv}
   \phi_{T-t}(v-i\alpha)S_{t-}^{\alpha+iv}dv \nonumber \\
&= \frac{e^k}{\pi}\int_0^\infty e^{-i(v-i\alpha)k}\psi_1(v-i\alpha)dv\,
\label{eq-I1}
\end{align}
where $k:=\log K$ and
$\psi_1(z):=\frac{\phi_{T-t}(z)S^{iz}_{t-}}{iz-1}$ for $z\in\bbC$.
Thus, we can compute $I_1$ with the FFT based on Subsection~2.1.

We turn next to $I_2$
(${=}\int_{\bbR_0}\bbE_{\tP}[(S_Te^x-K)^+-(S_T-K)^+\mid\calF_{t-}](e^x-1)
\nu(dx)$).
First, we have the following integral representation:

\begin{prop}
For any $K>0$,
\begin{equation}
\label{eq-prop-IR2}
\bbE_{\tP}[(S_T-K)^+\mid\calF_{t-}]
=\frac{1}{\pi}\int_0^\infty K^{-iv-\alpha+1}\frac{\phi_{T-t}(v-i\alpha)
 S_{t-}^{\alpha+iv}}{(\alpha-1+iv)(\alpha+iv)}\,dv
\end{equation}
for any $t\in[0,T]$ and any $\alpha\in(1,2]$.
Note that the right-hand side is independent of the choice of $\alpha$.
\end{prop}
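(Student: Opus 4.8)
The plan is to mirror the proof of Proposition~\ref{prop-IR1} almost verbatim, the only genuinely new ingredient being the Fourier transform of the call payoff. First I would set $G(x):=(x-K)^+$, $g(x):=G(e^x)=(e^x-K)^+$, and $\hat g(z):=\int_{\bbR}e^{izx}g(x)\,dx$ for $z\in\bbC$. By Lemma~\ref{lem-IR} together with the independence of $L'_{T-t}$ from $\calF_{t-}$, the conditional expectation becomes $\bbE_{\tP}[(S_T-K)^+\mid\calF_{t-}]=\int_{\bbR}g(x+L_{t-})\,p(dx)$ with $p(A):=\tP(L'_{T-t}\in A)$, exactly as in the previous proposition.

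Next I would invoke Proposition~2 of \cite{Tankov} again: provided the four conditions (a)--(d) hold for this $g$, one has $\int_{\bbR}g(x+L_{t-})\,p(dx)=\frac{1}{2\pi}\int_{\bbR}\hat g(v+i\alpha)\phi_{T-t}(-v-i\alpha)S_{t-}^{\alpha-iv}\,dv$, which folds to $\frac1\pi\int_0^\infty\hat g(-v+i\alpha)\phi_{T-t}(v-i\alpha)S_{t-}^{\alpha+iv}\,dv$ by the same conjugation-symmetry argument used in Proposition~\ref{prop-IR1} (the value of the integrand at $-v$ is the complex conjugate of its value at $v$, since $g$ and $S_{t-}$ are real and $\phi_{T-t}(-v-i\alpha)=\overline{\phi_{T-t}(v-i\alpha)}$). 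Verifying the conditions is where most of the care nominally goes, but it is light: conditions (c) $\bbE_{\tP}[e^{\alpha L_{T-t}}]<\infty$ and (d) $\int_{\bbR}|\phi_{T-t}(v-i\alpha)|/(1+|v|)\,dv<\infty$ involve only $\phi_{T-t}$ and $S_{T-t}$, not $g$, so they are literally identical to Proposition~\ref{prop-IR1} and already established for $\alpha\in(1,2]$; conditions (a) and (b) for $g(x)e^{-\alpha x}=(e^x-K)^+e^{-\alpha x}$ hold because this function vanishes for $x\le\log K$ and equals $e^{(1-\alpha)x}-Ke^{-\alpha x}$ for $x>\log K$, which is continuous, of finite variation, and integrable once $\alpha>1$.

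The one computation to carry out explicitly is $\hat g$. For $\mathrm{Im}(z)>1$ both pieces of $\int_{\log K}^\infty e^{izx}(e^x-K)\,dx$ converge, and a direct evaluation gives $\hat g(z)=\dfrac{K^{iz+1}}{iz(iz+1)}$. Substituting $z=-v+i\alpha$ (legitimate precisely because $\alpha>1$ places $z$ in the half-plane $\mathrm{Im}(z)>1$) and using $iz=-\alpha-iv$ and $iz+1=1-\alpha-iv$ turns the denominator into $(\alpha+iv)(\alpha-1+iv)$ and the numerator into $K^{-iv-\alpha+1}$, which is exactly (\ref{eq-prop-IR2}); independence of $\alpha$ is inherited from the cited Tankov result.

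The main obstacle, such as it is, is bookkeeping rather than conceptual: one must keep the analyticity strip $\mathrm{Im}(z)>1$ of $\hat g$ aligned with the admissible range $\alpha\in(1,2]$ (the bound $\alpha>1$ guaranteeing convergence of $\hat g$ and the integrability in (b), and the bound $\alpha\le2$ coming from the $L^2(\tP)$ integrability of $S_{T-t}$ used in (c)), and track the sign conventions through the $v\mapsto-v$ fold so that the extra factor $1/(\alpha+iv)$—the sole difference from (\ref{eq-prop-IR1})—lands in the denominator. No estimate beyond those already proved in Proposition~\ref{prop-IR1} is needed.
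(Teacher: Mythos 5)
Your proposal is correct and is essentially the paper's own proof: the paper simply says the result follows ``in the same manner as Proposition~\ref{prop-IR1} but with $G(x)=(x-K)^+$,'' and your write-up fills in exactly the details that phrase compresses (the computation $\hat g(z)=K^{iz+1}/[iz(iz+1)]$ valid for $\mathrm{Im}(z)>1$, the verification of conditions (a)--(b) for the call payoff, and the observation that (c)--(d) are unchanged from Proposition~\ref{prop-IR1}). No discrepancy to report.
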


\begin{proof}
We can see this in the same manner as Proposition~\ref{prop-IR1}
but with $G(x)=(x-K)^+$.
\end{proof}

\noindent
Note that (\ref{eq-prop-IR2}) coincides with (\ref{eq-CM}),
where $\alpha-1$ in (\ref{eq-prop-IR2}) corresponds to
$\alpha$ in (\ref{eq-CM}).
Denoting $\psi_2(z):=\frac{\phi_{T-t}(z)S_{t-}^{iz}}{(iz-1)iz}$
for $z\in\bbC$ and $\zeta:=v-i\alpha$, we have
\begin{eqnarray}
\bbE_{\tP}[(S_T-K)^+\mid\calF_{t-}]
&=&\frac{1}{\pi}\int_0^\infty K^{-iv-\alpha+1}
   \frac{\phi_{T-t}(v-i\alpha)S_{t-}^{\alpha+iv}}
   {(\alpha-1+iv)(\alpha+iv)}\,dv \nonumber \\
&=&\frac{1}{\pi}\int_0^\infty K^{-i\zeta+1}
   \frac{\phi_{T-t}(\zeta)S_{t-}^{i\zeta}}{(i\zeta-1)i\zeta}\,dv \nonumber \\
&=&\frac{1}{\pi}\int_0^\infty K^{-i\zeta+1}\psi_2(\zeta)dv
   =:f(K).
\label{eq-I2-1}
\end{eqnarray}
Note that $f(K)$ is computed with the FFT.
Moreover, Fubini's theorem implies
\begin{align}
I_2
&= \int_{\bbR_0}\bbE_{\tP}[(S_Te^x-K)^+-(S_T-K)^+\mid\calF_{t-}]
   (e^x-1)\nu(dx) \nonumber\\
&= \int_{\bbR_0}\l\{e^xf(e^{-x}K)-f(K)\r\}(e^x-1)\nu(dx) \nonumber\\
&= \int_{\bbR_0}\l\{\frac{e^x}{\pi}\int_0^\infty(Ke^{-x})^{-i\zeta+1}
   \psi_2(\zeta)dv-\frac{1}{\pi}\int_0^\infty K^{-i\zeta+1}\psi_2(\zeta)dv\r\}
   (e^x-1)\nu(dx) \nonumber \\
&= \int_{\bbR_0}\l\{\frac{1}{\pi}\int_0^\infty(e^{i\zeta x}-1)K^{-i\zeta+1}
   \psi_2(\zeta)dv\r\}(e^x-1)\nu(dx) \nonumber \\
&= \frac{1}{\pi}\int_0^\infty K^{-i\zeta+1}
   \int_{\bbR_0}(e^{i\zeta x}-1)(e^x-1)\nu(dx)\psi_2(\zeta)dv\,,
\label{eq-I2-2}
\end{align}
which is the same form as (\ref{eq-CM}),
because the integrand of (\ref{eq-I2-2}) is a function of $\zeta$.
However, we cannot compute (\ref{eq-I2-2}) numerically as it stands,
because it is not possible to compute
the integral $\int_{\bbR_0}(e^{i\zeta x}-1)(e^x-1)\nu(dx)$ directly.
Thus, we need to make further model-dependent calculations.
In Sections~3 and 4, respectively, we evolve (\ref{eq-I2-2})
into a linear combination of Fourier transforms for Merton jump-diffusion
models and variance gamma models.

\begin{rem}
Regarding $LRM_t$, $I_1$, and $I_2$ as functions of $S_{t-}$ and $K$,
we have \\$I_i(S_{t-},K)/S_{t-}=I_i(1,K/S_{t-})$ for $i=1,2$
by (\ref{eq-prop-IR1}) and (\ref{eq-I2-2}), and
\[
LRM_t(S_{t-},K)
=\frac{\sigma^2I_1(S_{t-},K)+I_2(S_{t-},K)}{S_{t-}
  \big(\sigma^2+\int_{\bbR_0}(e^x-1)^2\nu(dx)\big)}
=\frac{\sigma^2I_1(1,K/S_{t-})+I_2(1,K/S_{t-})}{
  \sigma^2+\int_{\bbR_0}(e^x-1)^2\nu(dx)}
\]
by (\ref{eq-prop-AS}).
As a result, $LRM_t$ is given as a function of $K/S_{t-} =:m_{t-}$,
where $m_{t-}$ is called \textit{moneyness}.
Thus, we denote $LRM_t$ by $LRM_t(m_{t-})$.
As a by-product of this, we can analyze jump impacts on LRM.
If the process $L$ has a jump with size $y\in\bbR_0$ at time $t$,
then the moneyness $m_{t-}$ changes into $m_{t-}e^{-y}$ at the moment
when the jump occurs.
Thus, LRM also changes from $LRM_t(m_{t-})$ to $LRM_t(m_{t-}e^{-y})$.
We can regard the difference $LRM_t(m_{t-}e^{-y})-LRM_t(m_{t-})$ as
a jump impact.
In particular, $LRM_t(e^{-y})-LRM_t(1)$ represents a jump impact
when the option is at the money.
\end{rem}

\begin{rem}
Hereafter, we fix $\alpha\in(1,2]$ arbitrarily.
Moreover, we denote $\zeta := v-i\alpha$ for $v\in\bbR$, so
we may regard $\zeta$ as a function of $v$.
\end{rem}

\setcounter{equation}{0}
\section{Merton Jump-Diffusion Models}
We consider the case in which $L$ is given
as a Merton jump-diffusion process,
which consists of a diffusion component with volatility $\sigma>0$ and
compound Poisson jumps with three parameters, $m\in\bbR$, $\delta>0$, and
$\gamma>0$.
Note that $\gamma$ represents the jump intensity and that
the sizes of the jumps are distributed normally with mean $m$ and
variance $\delta^2$.
Thus, its L\'evy measure $\nu$ is given by
\[
\nu(dx)
=\frac{\gamma}{\sqrt{2\pi}\delta}\exp\l\{-\frac{(x-m)^2}{2\delta^2}\r\}dx.
\]
When it desirable to emphasize the parameters, we write $\nu$ as
$\nu[\gamma,m,\delta]$.
Note that the first condition of Assumption~\ref{ass-1} is satisfied
for any $m\in\bbR$, $\delta>0$, and $\gamma>0$.
In addition, the second condition is equivalent to
\[
0\geq\mu+\frac{\sigma^2}{2}+\gamma\l\{\exp\l(m+\frac{\delta^2}{2}\r)-1-m\r\}
\]
and
\[
\mu+\frac{3\sigma^2}{2}+\gamma\l\{\exp(2m+2\delta^2)
-\exp\l(m+\frac{\delta^2}{2}\r)-m\r\}>0.
\]
We consider only the case in which the parameters satisfy
Assumption~\ref{ass-1}.

\subsection{Mathematical preliminaries}
Our aim here is threefold:
(1) to give an analytic form for $\phi_{T-t}(z)$
    (${:=}\bbE_{\tP}[e^{izL_{T-t}}]$);
(2) to evolve (\ref{eq-I2-2}) into a linear combination of
    three Fourier transforms; and
(3) to give sufficient conditions for $N\eta$ under which (\ref{eq-CM3}) holds
    for a given $\ve>0$.

First, we provide an analytic form of $\phi_{T-t}$.
To this end, we begin by calculating $\nu^{\tP}$.
\begin{prop}
\label{prop-Mer1}
We have 
\begin{equation}
\label{eq-prop-Mer1}
\nu^{\bbP^*}(dx)
=\nu [(1+h)\gamma,m,\delta^2](dx)
+\nu\l[-h\gamma\exp\l\{\frac{2m+\delta^2}{2}\r\},m+\delta^2,\delta^2\r](dx),
\end{equation}
where $h:=\frac{\mu^S}{\sigma_2+\int_{\bbR_0}(e^x-1)^2\nu(dx)}$.
\end{prop}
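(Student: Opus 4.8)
The plan is to work directly from the defining relation $\nu^{\tP}(dx)=(1-\theta_x)\nu(dx)$ recorded above, and to reduce the whole statement to a single Gaussian manipulation. Setting $h:=\frac{\mu^S}{\sigma^2+\int_{\bbR_0}(e^x-1)^2\nu(dx)}$, I would first note that $\theta_x=h(e^x-1)$, so that
\[
\nu^{\tP}(dx)=\bigl(1-h(e^x-1)\bigr)\nu(dx)=(1+h)\,\nu(dx)-h\,e^x\nu(dx).
\]
The first summand is already of the desired type: multiplying the intensity $\gamma$ by $(1+h)$ leaves the Gaussian factor of $\nu[\gamma,m,\delta]$ untouched, so $(1+h)\nu(dx)=\nu[(1+h)\gamma,m,\delta](dx)$, which is the first term in (\ref{eq-prop-Mer1}).

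The only genuine computation is to recognize $e^x\nu(dx)$ as a second, rescaled Merton measure. For this I would multiply the Gaussian density by $e^x$ and complete the square in the exponent, using
\[
x-\frac{(x-m)^2}{2\delta^2}=-\frac{\bigl(x-(m+\delta^2)\bigr)^2}{2\delta^2}+\frac{2m+\delta^2}{2}.
\]
Thus multiplication by $e^x$ shifts the mean from $m$ to $m+\delta^2$, leaves the variance $\delta^2$ unchanged, and factors out the constant $\exp\l\{\tfrac{2m+\delta^2}{2}\r\}$, giving $e^x\nu(dx)=\exp\l\{\tfrac{2m+\delta^2}{2}\r\}\nu[\gamma,m+\delta^2,\delta](dx)$. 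Consequently $-h\,e^x\nu(dx)=\nu\bigl[-h\gamma\exp\{(2m+\delta^2)/2\},\,m+\delta^2,\,\delta\bigr](dx)$, and adding the two pieces yields (\ref{eq-prop-Mer1}).

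There is essentially no hard step here; the entire argument is the completion of the square, so the main thing to watch is the bookkeeping of the three parameters. In particular one must retain the \emph{negative} intensity on the shifted Gaussian, which is what makes $\nu^{\tP}$ a signed linear combination of two Gaussian densities rather than a single Merton measure; the shifted component also has mean $m+\delta^2$ and the \emph{same} standard deviation $\delta$ (so its variance remains $\delta^2$, the third slot of $\nu[\cdot,\cdot,\cdot]$). Finally I would remark that, under Assumption~\ref{ass-1}, both pieces integrate $|x|\vee x^2$ and $(e^x-1)^n$ finitely, so the splitting is legitimate as an identity of signed measures.
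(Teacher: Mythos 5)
Your proof is correct and follows essentially the same route as the paper: decompose $\nu^{\tP}(dx)=(1-h(e^x-1))\nu(dx)=(1+h)\nu(dx)-he^x\nu(dx)$ and then complete the square to identify $e^x\nu(dx)$ as $\exp\l\{\frac{2m+\delta^2}{2}\r\}$ times a Gaussian L\'evy measure with mean $m+\delta^2$ and the same variance. Your closing remarks on the sign of the intensity and on the notational question of whether the third slot of $\nu[\cdot,\cdot,\cdot]$ is $\delta$ or $\delta^2$ (the paper's statement is in fact inconsistent with its own definition on this point) are sensible additions but not needed for the argument.
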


\begin{proof}
By Assumption~\ref{ass-1}, $0\geq h>-1$.
Hence,
\[
\nu^{\bbP^*}(dx)
=(1-\theta_x)\nu(dx)=(1-h(e^x-1))\nu(dx)=(1+h)\nu(dx)-he^x\nu(dx).
\]
Moreover,
\begin{align*}
e^x\nu(dx)
&=\frac{\gamma}{\sqrt{2\pi}\delta}\exp\l\{x-\frac{(x-m)^2}{2\delta^2}\r\}dx \\
&=\frac{\gamma}{\sqrt{2\pi}\delta}
  \exp\l\{-\frac{[x-(m+\delta^2)]^2}{2\delta^2}+\frac{2m+\delta^2}{2}\r\}dx \\
&=\nu\l[\gamma\exp\l\{\frac{2m+\delta^2}{2}\r\},m+\delta^2,\delta^2\r](dx),
\end{align*}
from which (\ref{eq-prop-Mer1}) follows.
\end{proof}

\noindent
Next, we calculate $\phi_{T-t}(\zeta)$ for $t\in[0,T]$.

\begin{prop}
\label{prop-Mer2}
For any $t\in[0,T]$ and $v\in\bbR$, with $\zeta:=v-i\alpha$,
\begin{eqnarray*}
\phi_{T-t}(\zeta)
&=&\exp\l\{(T-t)\l[i\zeta\mu^*-\frac{\sigma^2\zeta^2}{2}
   +\int_{\bbR_0}(e^{i\zeta x}-1-i\zeta x)\nu^{\bbP^*}(dx)\r]\r\} \\
&=&\exp\Bigg\{(T-t)\Bigg[i\zeta\mu^*-\frac{\sigma^2\zeta^2}{2}
   +(1+h)\gamma(e^{im\zeta-\frac{\zeta^2\delta^2}{2}}-1-im\zeta) \\
&& \qquad\qquad\qquad\; -h\gamma e^{\frac{2m+\delta^2}{2}}[e^{i(m+\delta^2)
   \zeta
   -\frac{\zeta^2\delta^2}{2}}-1-i\zeta(m+\delta^2)]\Bigg]\Bigg\}.
\end{eqnarray*}
\end{prop}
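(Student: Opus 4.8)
The plan is to read off the characteristic exponent of $L$ under $\tP$ directly from the L\'evy--Khintchine formula and then to evaluate the resulting jump integral against the explicit L\'evy measure furnished by Proposition~\ref{prop-Mer1}. Recall from the introduction that under $\tP$ the process $L$ is again a L\'evy process, with compensated representation $L_{T-t}=\mu^*(T-t)+\sigma W_{T-t}^{\tP}+\int_{\bbR_0}x\,\tN^{\tP}([0,T-t],dx)$, diffusion coefficient $\sigma^2$, and L\'evy measure $\nu^{\tP}=\nu^{\bbP^*}$. For a L\'evy process with this characteristic triplet, the L\'evy--Khintchine formula gives, for a \emph{real} argument $u$, $\bbE_{\tP}[e^{iuL_{T-t}}]=\exp\{(T-t)[iu\mu^*-\frac{\sigma^2u^2}{2}+\int_{\bbR_0}(e^{iux}-1-iux)\nu^{\tP}(dx)]\}$, where the compensating term $-iux$ appears precisely because the jump part is written through the compensated measure $\tN^{\tP}$. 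This is exactly the first displayed equality, but with $u\in\bbR$ in place of $\zeta=v-i\alpha$.

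First I would promote this identity from real $u$ to the complex argument $\zeta=v-i\alpha$ with $\alpha\in(1,2]$. Both sides are analytic in $z$ on the strip $\{-\alpha\le\mathrm{Im}\,z\le0\}$ and agree on the real axis, so by the identity theorem they coincide throughout the strip. Analyticity of $z\mapsto\bbE_{\tP}[e^{izL_{T-t}}]$ on this strip follows from the finiteness of the exponential moment $\bbE_{\tP}[e^{\alpha L_{T-t}}]$, already verified as condition~(c) in the proof of Proposition~\ref{prop-IR1}; analyticity of the right-hand side reduces, via $|e^{izx}|=e^{-x\,\mathrm{Im}\,z}\le e^{\alpha x}$ for $x>0$, to the same bound. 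I regard this analytic-continuation step as the main conceptual obstacle, although for the Merton model it is particularly benign: since $\nu$ is Gaussian it possesses all exponential moments, so every integral below converges for arbitrary complex $\zeta$ and the continuation is automatic.

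It then remains to evaluate $\int_{\bbR_0}(e^{i\zeta x}-1-i\zeta x)\nu^{\tP}(dx)$ using the two-term decomposition of $\nu^{\bbP^*}$ from Proposition~\ref{prop-Mer1}. For a scaled Gaussian L\'evy measure $\nu[\kappa,a,s](dx)=\frac{\kappa}{\sqrt{2\pi}s}\exp\{-(x-a)^2/(2s^2)\}\,dx$, the Gaussian moment generating function gives $\int_{\bbR_0}e^{i\zeta x}\nu[\kappa,a,s](dx)=\kappa\exp\{ia\zeta-\frac{s^2\zeta^2}{2}\}$, while the total mass and first moment are $\kappa$ and $\kappa a$, whence $\int_{\bbR_0}(e^{i\zeta x}-1-i\zeta x)\nu[\kappa,a,s](dx)=\kappa(e^{ia\zeta-s^2\zeta^2/2}-1-ia\zeta)$. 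Applying this with $(\kappa,a,s)=((1+h)\gamma,m,\delta)$ to the first term of Proposition~\ref{prop-Mer1} and with $(\kappa,a,s)=(-h\gamma e^{(2m+\delta^2)/2},m+\delta^2,\delta)$ to the second, then substituting the two contributions back into the exponent, reproduces exactly the two bracketed jump terms in the second displayed equality. Collecting the drift term $i\zeta\mu^*$, the diffusion term $-\frac{\sigma^2\zeta^2}{2}$, and these two jump terms completes the proof.
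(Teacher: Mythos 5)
Your proof is correct, and it reaches the result by a noticeably different mechanism than the paper. The paper does not continue analytically from the real axis at all: it proves the first equality by direct computation, writing $L_{T-t}=\mu^*(T-t)+\sigma W_{T-t}^{\tP}+\int_{\bbR_0}x\tN^{\tP}([0,T-t],dx)$, factoring $\bbE_{\tP}[e^{i\zeta L_{T-t}}]$ into drift, Brownian, and compensated-jump factors by independence, and evaluating each factor at the complex argument $\zeta=v-i\alpha$ directly; the second equality it omits entirely (``we only have to show the first equality''), treating the Gaussian evaluation as routine. You instead start from the L\'evy--Khintchine formula at real arguments and pass to complex $\zeta$ by the identity theorem, with the exponential moment $\bbE_{\tP}[e^{\alpha L_{T-t}}]<\infty$ (condition (c) in the proof of Proposition~\ref{prop-IR1}) supplying analyticity, and you then carry out the Gaussian integration against the two components of $\nu^{\bbP^*}$ from Proposition~\ref{prop-Mer1} --- precisely the computation the paper declares trivial. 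What your route buys is rigor at the one genuinely delicate point: since $\zeta$ has imaginary part $-\alpha\neq 0$, the paper's evaluation tacitly assumes that the Gaussian and compensated-Poisson exponential formulas remain valid at complex arguments, which is exactly what your continuation argument justifies; the paper's route, in exchange, is shorter and stays self-contained. One caveat on your write-up: agreement on the real axis, which is the \emph{boundary} of the strip $\{-\alpha\le\mathrm{Im}\,z\le 0\}$, does not by itself feed the identity theorem (that requires an accumulation point interior to the domain of analyticity, or else a reflection argument). Your parenthetical observation --- that for the Merton model the L\'evy measure is a Gaussian mixture, so all exponential moments exist and both sides are entire functions agreeing on $\bbR$ --- is what actually makes the argument airtight here, and it should be promoted from an aside to the argument itself.
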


\begin{proof}
We only have to show the first equality:
{\small
\begin{eqnarray*}
\phi_{T-t}(\zeta)
&=&\bbE_{\tP}\l[\exp\l\{i\zeta\l[\mu^*(T-t)+\sigma W_{T-t}^{\tP}
    +\int_{\bbR_0}x\tN^{\tP}([0,T-t],dx)\r]\r\}\r] \\
&=&\exp\l\{(T-t)i\zeta\mu^*\r\}\bbE_{\tP}[e^{i\zeta\sigma W_{T-t}^{\tP}}]
   \bbE_{\tP}\l[\exp\l\{i\zeta\int_{\bbR_0}x\tN^{\tP}([0,T-t],dx)\r\}\r] \\
&=&\exp\l\{(T-t)\l[i\zeta\mu^*-\frac{\sigma^2\zeta^2}{2}
   +\int_{\bbR_0}(e^{i\zeta x}-1-i\zeta x)\nu^{\bbP^*}(dx)\r]\r\}.
\end{eqnarray*}}
\end{proof}

Second, we evolve (\ref{eq-I2-2}).
We define 
$\tilde{\psi}(z):=\psi_2(z)\exp\l\{-\frac{1}{2}\delta^2 z^2\r\}$
for $z\in\bbC$ and
$\tilde{f}(K):=\frac{1}{\pi}\int_0^\infty K^{-i\zeta+1}
               \tilde{\psi}(\zeta)dv$.
Remark that $\tilde{f}$ is computed with the FFT
as well as $f$ defined in (\ref{eq-I2-1}).
The following proposition demonstrates (\ref{eq-I2-2}), namely,
$I_2$ is given by a linear combination of three Fourier transforms.

\begin{prop}
\label{prop-Mer3}
We have
\begin{eqnarray}
\lefteqn{\int_{\bbR_0}\bbE_{\tP}[(S_Te^x-K)^+-(S_T-K)^+\mid\calF_{t-}]
         (e^x-1)\nu(dx)} \nonumber \\
&\qquad =&\gamma e^{2m+\frac{3}{2}\delta^2}\tilde{f}(Ke^{-m-\delta^2})
   -\gamma e^m\tilde{f}(Ke^{-m})+\gamma(1-e^{m+\frac{\delta^2}{2}})f(K)
\label{eq-prop-Mer3}
\end{eqnarray}
for any $t\in[0,T]$.
\end{prop}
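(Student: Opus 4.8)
The plan is to take the integral representation (\ref{eq-I2-2}) as the starting point and to compute the inner jump integral $\int_{\bbR_0}(e^{i\zeta x}-1)(e^x-1)\nu(dx)$ in closed form, which is feasible in the Merton case precisely because $\nu$ is a constant multiple of a Gaussian density. Expanding the product gives $(e^{i\zeta x}-1)(e^x-1)=e^{(i\zeta+1)x}-e^{i\zeta x}-e^x+1$, so the inner integral splits into four pieces, each of the form $\int_{\bbR}e^{sx}\nu(dx)$ with $s\in\{i\zeta+1,\,i\zeta,\,1,\,0\}$. Using the moment generating function of the normal law, $\int_{\bbR}e^{sx}\nu(dx)=\gamma\exp\{sm+\frac{1}{2}s^2\delta^2\}$, each piece is evaluated explicitly; the only spot that needs care is completing the square in $\frac{1}{2}(i\zeta+1)^2\delta^2=-\frac{\zeta^2\delta^2}{2}+i\zeta\delta^2+\frac{\delta^2}{2}$ for the $s=i\zeta+1$ term.

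This produces two $\zeta$-dependent exponential terms, each carrying a factor $e^{-\zeta^2\delta^2/2}$, together with the $\zeta$-independent constant $\gamma(1-e^{m+\delta^2/2})$ obtained by combining the $s=0$ and $s=1$ pieces. Substituting back into (\ref{eq-I2-2}), the constant term immediately yields $\gamma(1-e^{m+\delta^2/2})f(K)$ by the definition of $f$ in (\ref{eq-I2-1}), which is exactly the last summand of (\ref{eq-prop-Mer3}).

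For the two remaining terms, the key algebraic identity is $K^{-i\zeta+1}e^{i\zeta a}=e^{a}(Ke^{-a})^{-i\zeta+1}$, which absorbs a factor $e^{i\zeta a}$ into a shift of the strike from $K$ to $Ke^{-a}$ inside the kernel $K^{-i\zeta+1}$, at the cost of an overall constant $e^{a}$. Applying this with $a=m$ to the $s=i\zeta$ term and with $a=m+\delta^2$ to the $s=i\zeta+1$ term, and observing that the surviving factor $e^{-\zeta^2\delta^2/2}$ turns $\psi_2(\zeta)$ into $\tilde{\psi}(\zeta)$ (hence $f$ into $\tilde{f}$), one recovers $-\gamma e^m\tilde{f}(Ke^{-m})$ and $\gamma e^{2m+\frac{3}{2}\delta^2}\tilde{f}(Ke^{-m-\delta^2})$; the prefactor $e^{2m+\frac{3}{2}\delta^2}$ appears as the product of the Gaussian prefactor $e^{m+\delta^2/2}$ and the strike-shift factor $e^{m+\delta^2}$.

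I do not anticipate a genuine obstacle: the argument is essentially bookkeeping of Gaussian exponents and of the strike-shift identity. The one delicate point is the exponent arithmetic for the $s=i\zeta+1$ term, where the linear-in-$\zeta$ part $i\zeta\delta^2$ must be routed into the strike shift $a=m+\delta^2$ while the $-\zeta^2\delta^2/2$ part is routed into $\tilde{\psi}$; keeping these two roles of $\delta^2$ separate is what makes the three coefficients in (\ref{eq-prop-Mer3}) come out correctly. No new integrability verification is required beyond those already established for Proposition~\ref{prop-IR1}, since the modified integrand differs from the one there only by the bounded-in-$v$ factor $e^{-\zeta^2\delta^2/2}$, so the interchange underlying (\ref{eq-I2-2}) and the convergence of the resulting $dv$-integrals are unaffected.
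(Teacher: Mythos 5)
Your proposal is correct and follows essentially the same route as the paper's proof: expand $(e^{i\zeta x}-1)(e^x-1)$ into four exponentials, evaluate each against the Gaussian L\'evy measure via the normal moment generating function (completing the square for the $i\zeta+1$ term), and then absorb the resulting factors $e^{i\zeta a}$ into strike shifts $K\mapsto Ke^{-a}$ while the $e^{-\delta^2\zeta^2/2}$ factor converts $\psi_2$ into $\tilde{\psi}$ and $f$ into $\tilde{f}$. All the exponent bookkeeping in your sketch matches the paper's computation, so there is nothing to add.
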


\begin{proof}
We calculate
{\small
\begin{eqnarray*}
\lefteqn{\int_{\bbR_0}(e^{i\zeta x}-1)(e^x-1)\nu(dx)} \\
&\qquad=&\int_{\bbR_0}(e^{(i\zeta+1)x}-e^{i\zeta x}+1-e^x)\nu(dx) \\
&\qquad=&\gamma\exp\l\{(i\zeta+1)m+\frac{\delta^2}{2}(i\zeta+1)^2\r\}
   -\gamma\exp\l\{i\zeta m-\frac{\delta^2}{2}\zeta^2\r\}
   +\gamma(1-e^{m+\frac{\delta^2}{2}}).
\end{eqnarray*}}
Hence, we obtain
\begin{eqnarray*}
(\ref{eq-I2-2})
&=&\frac{\gamma}{\pi}e^{m+\frac{\delta^2}{2}}
   \int_0^\infty e^{i\zeta(m+\delta^2)}K^{-i\zeta+1}e^{-\frac{\delta^2}{2}
   \zeta^2}\psi_2(\zeta)dv \\
&& -\frac{\gamma}{\pi}\int_0^\infty(Ke^{-m})^{-i\zeta+1}e^m
   e^{-\frac{\delta^2}{2}\zeta^2}\psi_2(\zeta)dv
   +\gamma(1-e^{m+\frac{\delta^2}{2}})f(K) \\
&=&\gamma e^{2m+\frac{3}{2}\delta^2}\tilde{f}(Ke^{-m-\delta^2})
   -\gamma e^m\tilde{f}(Ke^{-m})+\gamma(1-e^{m+\frac{\delta^2}{2}})f(K).
\end{eqnarray*}
\end{proof}

Third, we provide sufficient conditions for the product $N\eta$
under which (\ref{eq-CM3}) holds for a given allowable error $\ve>0$.
First of all, we determine an upper estimate for $\phi_{T-t}$.

\begin{prop}
\label{prop-Mer4}
We have 
\[
|\phi_{T-t}(v-i\alpha)|\leq C_1\exp\l\{-\frac{\sigma^2v^2(T-t)}{2}\r\}
\]
for any $v\in\bbR$, where
\begin{eqnarray*}
C_1
&=&\exp\l\{(T-t)\l[\alpha\mu^*+\frac{\sigma^2\alpha^2}{2}
   +\int_{\bbR_0}(e^{\alpha x}-1-\alpha x)\nu^{\bbP^*}(dx)\r]\r\} \\
&=&\exp\Bigg\{(T-t)\Bigg[\alpha\mu^*+\frac{\sigma^2\alpha^2}{2}+(1+h)\gamma
   (e^{m\alpha+\frac{\alpha^2\delta^2}{2}}-1-\alpha m) \\
&& \qquad -h\gamma e^{\frac{2m+\delta^2}{2}}\l[e^{(m+\delta^2)\alpha
   +\frac{\alpha^2\delta^2}{2}}-1-\alpha(m+\delta^2)\r]\Bigg]\Bigg\}.
\end{eqnarray*}
\end{prop}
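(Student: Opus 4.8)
The plan is to start from the explicit L\'evy--Khintchine representation of $\phi_{T-t}(\zeta)$ established in Proposition~\ref{prop-Mer2} and to extract the modulus by computing the real part of the exponent. Substituting $\zeta=v-i\alpha$, I would first record that $i\zeta=\alpha+iv$ and $\zeta^2=(v^2-\alpha^2)-2i\alpha v$, so that the real part of the drift-plus-diffusion contribution $(T-t)\l[i\zeta\mu^*-\frac{\sigma^2\zeta^2}{2}\r]$ equals $(T-t)\l[\alpha\mu^*-\frac{\sigma^2v^2}{2}+\frac{\sigma^2\alpha^2}{2}\r]$. The summand $-\frac{\sigma^2v^2}{2}(T-t)$ is exactly the Gaussian factor on the right-hand side of the claimed bound, while the remaining $v$-free pieces will be absorbed into $C_1$.

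The key step is bounding the real part of the jump integral. Since $e^{i\zeta x}=e^{\alpha x}e^{ivx}$, its real part is $e^{\alpha x}\cos(vx)$, and the real part of $-1-i\zeta x$ is $-1-\alpha x$; hence the real part of the integrand $e^{i\zeta x}-1-i\zeta x$ equals $e^{\alpha x}\cos(vx)-1-\alpha x$. The crucial observation is that $\cos(vx)\leq 1$ for all real $v,x$, so this real part is dominated by $e^{\alpha x}-1-\alpha x$, which no longer depends on $v$. Integrating against $\nu^{\bbP^*}$ and exponentiating, I obtain precisely the first displayed expression for $C_1$, and the estimate $|\phi_{T-t}(v-i\alpha)|\leq C_1\exp\l\{-\frac{\sigma^2v^2(T-t)}{2}\r\}$ follows since $|\phi_{T-t}|=\exp\l\{\mathrm{Re}[\text{exponent}]\r\}$.

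Finally, to pass from the first to the second (fully explicit) form of $C_1$, I would substitute the two-component expression for $\nu^{\bbP^*}$ from Proposition~\ref{prop-Mer1} and evaluate $\int_{\bbR_0}(e^{\alpha x}-1-\alpha x)\nu^{\bbP^*}(dx)$ componentwise, using that the Gaussian integral of $e^{\alpha x}$ against $\nu[\gamma,m,\delta^2]$ produces the shifted-mean factor $e^{m\alpha+\frac{\alpha^2\delta^2}{2}}$, exactly as in the derivation of Proposition~\ref{prop-Mer2}. I do not expect a genuine obstacle here: the only point demanding care is the bookkeeping of real parts under the substitution $\zeta=v-i\alpha$, together with the (automatic) finiteness of the jump integral, which for $\alpha\in(1,2]$ is guaranteed by Assumption~\ref{ass-1} and the Gaussian form of the jump density. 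The inequality $\cos(vx)\leq 1$ is what decouples the $v$-dependence from the jump term and isolates the clean Gaussian decay.
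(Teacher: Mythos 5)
Your proposal is correct and follows essentially the same route as the paper: both start from the representation of $\phi_{T-t}(\zeta)$ in Proposition~\ref{prop-Mer2}, isolate the real part of the exponent under $\zeta=v-i\alpha$, and bound the oscillatory jump contribution $e^{\alpha x}\cos(vx)$ by $e^{\alpha x}$ (the paper phrases this as bounding the modulus of the factor $\exp\{(T-t)\int_{\bbR_0}e^{(iv+\alpha)x}\nu^{\tP}(dx)\}$, which is the same inequality). The passage to the explicit Gaussian form of $C_1$ via Proposition~\ref{prop-Mer1} is likewise exactly what the paper intends.
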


\begin{proof}
Proposition~\ref{prop-Mer2} implies that
{\footnotesize
\begin{eqnarray*}
\lefteqn{\phi_{T-t}(v-i\alpha)} \\
&=&\exp\l\{(T-t)\l[i(v-i\alpha)\mu^*-\frac{\sigma^2(v-i\alpha)^2}{2}
   +\int_{\bbR_0}(e^{i(v-i\alpha)x}-1-i(v-i\alpha)x)\nu^{\tP}(dx)\r]\r\} \\
&=&\exp\l\{(T-t)\l[(iv+\alpha)\mu^*-\frac{\sigma^2(v^2-2i\alpha v-\alpha^2)}{2}
   +\int_{\bbR_0}(e^{(iv+\alpha)x}-1-(iv+\alpha)x)\nu^{\tP}(dx)\r]\r\} \\
&=&\exp\l\{(T-t)iv\l[\mu^*+\sigma^2\alpha-\int_{\bbR_0}x\nu^{\tP}(dx)\r]\r\}
   \exp\l\{(T-t)\int_{\bbR_0}e^{(iv+\alpha)x}\nu^{\tP}(dx)\r\} \\
&& \times\exp\l\{(T-t)\l[\alpha\mu^*-\frac{\sigma^2(v^2-\alpha^2)}{2}
   +\int_{\bbR_0}(-1-\alpha x)\nu^{\tP}(dx)\r]\r\}.
\end{eqnarray*}}
Noting that 
$\big|\exp\big\{(T-t)\int_{\bbR_0}e^{(iv+\alpha)x}\nu^{\tP}(dx)\big\}\big|
\leq\exp\big\{(T-t)\int_{\bbR_0}e^{\alpha x}\nu^{\tP}(dx)\big\}$, we have
{\small
\[
|\phi_{T-t}(v-i\alpha)|
\leq\exp\l\{(T-t)\l[\alpha\mu^*-\frac{\sigma^2(v^2-\alpha^2)}{2}
    +\int_{\bbR_0}(e^{\alpha x}-1-\alpha x)\nu^{\tP}(dx)\r]\r\}.
\]}
 \ 
\end{proof}

Propositions~\ref{prop-Mer5} and \ref{prop-Mer6} below give
sufficient conditions for $N\eta$ under which $I_1$ and $I_2$ satisfy
(\ref{eq-CM3}) for a given allowable error $\ve>0$, respectively.

\begin{prop}
\label{prop-Mer5}
Let $\ve>0$ and $t\in[0,T)$.
When $a>0$ satisfies
\begin{equation}
\label{eq-prop-Mer5}
\l(\frac{K}{\pi}\l(\frac{K}{S_{t-}}\r)^{-\alpha}C_1\r)^{1/4}
\frac{1}{\sigma\sqrt{T-t}\ve^{1/4}}
\leq a,
\end{equation}
we have
\[
\l|\frac{1}{\pi}\int_a^\infty\frac{K^{-iv-\alpha+1}}{\alpha-1+iv}
\phi_{T-t}(v-i\alpha)S_{t-}^{\alpha+iv}dv\r|
\leq \ve.
\]
\end{prop}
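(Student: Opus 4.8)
The plan is to dominate the modulus of the integrand factor by factor, convert the Gaussian decay coming from Proposition~\ref{prop-Mer4} into a pure power of $v$ by an elementary inequality, integrate the resulting power, and then match the constant against the hypothesis (\ref{eq-prop-Mer5}) on $a$.

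First I would pull the modulus inside the integral. Since $K>0$ and $S_{t-}>0$, one has $|K^{-iv-\alpha+1}|=K^{1-\alpha}$ and $|S_{t-}^{\alpha+iv}|=S_{t-}^{\alpha}$, while $|\alpha-1+iv|^{-1}=((\alpha-1)^2+v^2)^{-1/2}\le v^{-1}$ for $v>0$. Feeding in the bound $|\phi_{T-t}(v-i\alpha)|\le C_1\exp\{-\sigma^2v^2(T-t)/2\}$ supplied by Proposition~\ref{prop-Mer4}, I obtain
\[
\left|\frac{1}{\pi}\int_a^\infty\frac{K^{-iv-\alpha+1}}{\alpha-1+iv}
\phi_{T-t}(v-i\alpha)S_{t-}^{\alpha+iv}\,dv\right|
\le\frac{K^{1-\alpha}S_{t-}^{\alpha}C_1}{\pi}
\int_a^\infty\frac{\exp\{-\sigma^2v^2(T-t)/2\}}{v}\,dv.
\]

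The decisive step is to trade the Gaussian weight for a power of $v$, so that the tail integral acquires a closed form. I would invoke the elementary inequality $e^{x}\ge x^2$ for $x\ge0$ (equivalently $e^{-x}\le x^{-2}$ for $x>0$) with $x=\sigma^2v^2(T-t)/2$; this yields $\exp\{-\sigma^2v^2(T-t)/2\}\le 4\sigma^{-4}(T-t)^{-2}v^{-4}$, so the integrand is dominated by $4\sigma^{-4}(T-t)^{-2}v^{-5}$ and $\int_a^\infty v^{-5}\,dv=1/(4a^4)$. The tail is therefore bounded by exactly $\frac{K^{1-\alpha}S_{t-}^{\alpha}C_1}{\pi}\cdot\frac{1}{\sigma^4(T-t)^2a^4}$. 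I emphasize that one should use the sharp bound $e^{x}\ge x^2$ and not the weaker Taylor estimate $e^{x}\ge x^2/2$: the latter carries a spurious factor of two and would fail to reproduce the stated threshold exactly. (The strict case $t<T$ in the hypothesis is precisely what makes $x>0$, so that this inequality applies.)

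Finally I would rewrite the prefactor as $\frac{K^{1-\alpha}S_{t-}^{\alpha}C_1}{\pi}=\frac{K}{\pi}\left(\frac{K}{S_{t-}}\right)^{-\alpha}C_1$ and observe that the hypothesis (\ref{eq-prop-Mer5}), raised to the fourth power (using $(\sqrt{T-t})^4=(T-t)^2$ and $(\ve^{1/4})^4=\ve$), reads precisely $\frac{K}{\pi}\left(\frac{K}{S_{t-}}\right)^{-\alpha}C_1\cdot\frac{1}{\sigma^4(T-t)^2a^4}\le\ve$, which is the bound just established; hence the tail is at most $\ve$. The only genuinely delicate point is the selection of the exponential inequality in the middle step, which is what pins down the constant; everything else is routine bookkeeping of moduli together with a single power-law integration.
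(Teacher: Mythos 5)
Your proposal is correct and follows essentially the same route as the paper's own proof: bound the moduli factor by factor, use Proposition~\ref{prop-Mer4} together with $|\alpha-1+iv|^{-1}\leq v^{-1}$ and the inequality $e^{-x}\leq x^{-2}$ for $x>0$, integrate the resulting $v^{-5}$ tail, and match the constant against (\ref{eq-prop-Mer5}) raised to the fourth power. Your side remark is also accurate: the paper uses exactly the sharp bound $e^{-x}\leq x^{-2}$, and the weaker $e^{-x}\leq 2x^{-2}$ would indeed leave a spurious factor of $2$ that the stated threshold does not absorb.
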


\begin{proof}
Noting that $e^{-x}\leq x^{-2}$ for any $x>0$,
we have, by Proposition~\ref{prop-Mer4},
\begin{eqnarray*}
\lefteqn{\l|\frac{1}{\pi}\int_a^\infty\frac{K^{-iv-\alpha+1}}{\alpha-1+iv}
         \phi_{T-t}(v-i\alpha)S_{t-}^{\alpha+iv}dv\r|} \\
&\qquad\leq& \frac{1}{\pi}\int_a^\infty\frac{K^{-\alpha+1}}{|\alpha-1+iv|}
       |\phi_{T-t}(v-i\alpha)|S_{t-}^\alpha dv  \\
&\qquad\leq& \frac{K}{\pi}\l(\frac{K}{S_{t-}}\r)^{-\alpha}\int_a^\infty\frac{1}
       {|\alpha-1+iv|}C_1e^{-\frac{\sigma^2v^2}{2}(T-t)}dv \\
&\qquad\leq& \frac{K}{\pi}\l(\frac{K}{S_{t-}}\r)^{-\alpha}C_1\int_a^\infty
       \frac{1}{v}\l\{\frac{\sigma^2v^2}{2}(T-t)\r\}^{-2}dv \\
&\qquad=&    \frac{K}{\pi}\l(\frac{K}{S_{t-}}\r)^{-\alpha}C_1\int_a^\infty
       \frac{4v^{-5}}{\sigma^4(T-t)^2}dv
       =\frac{K}{\pi}\l(\frac{K}{S_{t-}}\r)^{-\alpha}\frac{C_1}
       {\sigma^4(T-t)^2a^4} \\
&\qquad\leq& \ve\,.
\end{eqnarray*}
\end{proof}

\begin{prop}
\label{prop-Mer6}
Let $\ve>0$ and $t\in[0,T)$.
If $a>0$ satisfies
\begin{equation}
\label{eq-prop-Mer6-1}
\frac{4C_1\gamma K}{5\pi\sigma^4(T-t)^2\ve}\l(\frac{K}{S_{t-}}\r)^{-\alpha}
\l\{e^{(\alpha+1)m+(\frac{\alpha^2}{2}+\alpha+\frac{1}{2})\delta^2}
+e^{m\alpha+\frac{\delta^2\alpha^2}{2}}+|1-e^{m+\frac{\delta^2}{2}}|\r\}
\leq a^5,
\end{equation}
then
\begin{equation}
\label{eq-prop-Mer6-2}
\l|\frac{1}{\pi}\int_a^\infty K^{-i\zeta+1}
\int_{\bbR_0}(e^{i\zeta x}-1)(e^x-1)\nu(dx)\psi_2(\zeta)dv\r|<\ve\,.
\end{equation}
\end{prop}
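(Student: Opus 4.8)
The plan is to follow the same strategy as in the proof of Proposition~\ref{prop-Mer5}, namely to bound the modulus of the integrand by a quantity that decays polynomially in $v$ and then integrate the tail explicitly. The crucial simplification is that the inner integral $\int_{\bbR_0}(e^{i\zeta x}-1)(e^x-1)\nu(dx)$ has already been evaluated in closed form in the proof of Proposition~\ref{prop-Mer3}, so I would work with that explicit expression rather than with the measure $\nu$ itself.

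First I would move the modulus inside the integral and factor the integrand into three pieces. Since $\zeta=v-i\alpha$, we have $|K^{-i\zeta+1}|=K^{1-\alpha}$, and writing $\psi_2(\zeta)=\frac{\phi_{T-t}(\zeta)S_{t-}^{i\zeta}}{(i\zeta-1)i\zeta}$, I would use $|S_{t-}^{i\zeta}|=S_{t-}^\alpha$, the bound $|\phi_{T-t}(v-i\alpha)|\le C_1 e^{-\sigma^2 v^2(T-t)/2}$ from Proposition~\ref{prop-Mer4}, and the elementary estimate $|(i\zeta-1)i\zeta|=|\alpha-1+iv|\,|\alpha+iv|\ge v^2$. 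This yields $|\psi_2(\zeta)|\le C_1 S_{t-}^\alpha e^{-\sigma^2 v^2(T-t)/2}/v^2$, and combined with $K^{1-\alpha}S_{t-}^\alpha=K(K/S_{t-})^{-\alpha}$ it reproduces exactly the prefactor structure seen in Proposition~\ref{prop-Mer5}.

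Next I would bound the inner integral using the explicit evaluation from Proposition~\ref{prop-Mer3}. Taking the modulus of each of the three terms $\gamma e^{(i\zeta+1)m+\frac{\delta^2}{2}(i\zeta+1)^2}$, $-\gamma e^{i\zeta m-\frac{\delta^2}{2}\zeta^2}$, and $\gamma(1-e^{m+\delta^2/2})$ requires computing the real parts of the complex exponents: for the first, $\operatorname{Re}[(i\zeta+1)m+\frac{\delta^2}{2}(i\zeta+1)^2]=(\alpha+1)m+\frac{\delta^2}{2}[(\alpha+1)^2-v^2]$, and for the second, $\operatorname{Re}[i\zeta m-\frac{\delta^2}{2}\zeta^2]=\alpha m-\frac{\delta^2}{2}(v^2-\alpha^2)$. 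Each of the first two moduli therefore carries a factor $e^{-\delta^2 v^2/2}\le 1$, which I would simply discard; this is the step that makes the bracketed sum in (\ref{eq-prop-Mer6-1}) independent of $v$, giving $\big|\int_{\bbR_0}(e^{i\zeta x}-1)(e^x-1)\nu(dx)\big|\le\gamma B$, where $B$ denotes that bracketed sum.

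Finally I would assemble the pieces, apply $e^{-x}\le x^{-2}$ with $x=\sigma^2 v^2(T-t)/2$ to replace the Gaussian factor by $4/(\sigma^4 v^4(T-t)^2)$, and integrate $\int_a^\infty v^{-6}\,dv=a^{-5}/5$. This produces the bound $\frac{4C_1\gamma K B}{5\pi\sigma^4(T-t)^2 a^5}(K/S_{t-})^{-\alpha}$, and rearranging the hypothesis (\ref{eq-prop-Mer6-1}) into the form $a^5\ge\cdots$ is precisely what forces this quantity below $\varepsilon$. I do not anticipate a genuine obstacle, since the argument parallels Proposition~\ref{prop-Mer5}; the only place demanding care is the bookkeeping of the real parts of the two Gaussian-type exponents, which must match the constants $(\alpha+1)m+(\frac{\alpha^2}{2}+\alpha+\frac{1}{2})\delta^2$ and $m\alpha+\frac{\delta^2\alpha^2}{2}$ appearing in the statement.
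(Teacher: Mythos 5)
Your proof is correct and follows essentially the same route as the paper's: both rest on Proposition~\ref{prop-Mer4} together with $|1/((i\zeta-1)i\zeta)|\le v^{-2}$, discard the factor $e^{-\delta^2 v^2/2}\le 1$ coming from the two Gaussian-type exponents, and finish with $e^{-x}\le x^{-2}$ and $\int_a^\infty v^{-6}\,dv=a^{-5}/5$. The only difference is organizational — you bound the closed-form inner integral $\int_{\bbR_0}(e^{i\zeta x}-1)(e^x-1)\nu(dx)$ pointwise before integrating, while the paper first passes through the statement of Proposition~\ref{prop-Mer3} and bounds the resulting strike-shifted Fourier transforms — and the constants produced are identical.
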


\begin{proof}
First, we estimate $\int_a^\infty|\psi_2(\zeta)|dv$.
Noting that 
\[
\l|\frac{1}{(i\zeta-1)i\zeta}\r|=\l|\frac{1}{(iv+\alpha-1)(iv+\alpha)}\r|
\leq\frac{1}{v^2}\,,
\]
Proposition~\ref{prop-Mer4} implies
\begin{align*}
\int_a^\infty|\psi_2(\zeta)|dv
&=    \int_a^\infty\l|\frac{\phi_{T-t}(v-i\alpha)S_{t-}^{i(v-i\alpha)}}
       {(i\zeta-1)i\zeta}\r|dv
       \leq C_1S_{t-}^\alpha\int_a^\infty\frac{e^{-\frac{\sigma^2v^2}{2}(T-t)}}
       {v^2}\,dv \\
&\leq \frac{4C_1S_{t-}^\alpha}{\sigma^4(T-t)^2}\int_a^\infty v^{-6}\,dv
       =\frac{4C_1S_{t-}^\alpha}{5\sigma^4(T-t)^2a^5}\,.
\end{align*}
Hence, Proposition~\ref{prop-Mer3} implies that
{\small
\begin{eqnarray*}
\lefteqn{\mbox{L.H.S.\ of (\ref{eq-prop-Mer6-2})}} \\
&=&   \l|\frac{\gamma e^{2m+\frac{3}{2}\delta^2}}{\pi}
      \int_a^\infty(Ke^{-m-\delta^2})^{-i\zeta+1}\tilde{\psi}(\zeta)dv
      -\frac{\gamma e^m}{\pi}\int_a^\infty(Ke^{-m})^{-i\zeta+1}\tilde{\psi}
      (\zeta)dv \r. \\
&&    \l.+\,\frac{\gamma(1-e^{m+\frac{\delta^2}{2}})}{\pi}
      \int_a^\infty K^{-i\zeta+1}\psi_2(\zeta)dv\r| \\
&\leq&\frac{\gamma}{\pi}\int_a^\infty\l|e^{2m+\frac{3}{2}\delta^2}
      (Ke^{-m-\delta^2})^{-iv-\alpha+1}-e^m(Ke^{-m})^{-iv-\alpha+1}\r|
      \l|\psi_2(\zeta)\r|\l|e^{-\frac{\delta^2\zeta^2}{2}}\r|dv  \\
&&    +\,\frac{\gamma|1-e^{m+\frac{\delta^2}{2}}|}{\pi}
      \int_a^\infty|K^{-iv-\alpha+1}|\,|\psi_2(\zeta)|dv \\
&\leq&\frac{\gamma}{\pi}\int_a^\infty\l\{e^{2m+\frac{3}{2}\delta^2}
      (Ke^{-m-\delta^2})^{-\alpha+1}+e^m(Ke^{-m})^{-\alpha+1}\r\}
      |\psi_2(\zeta)|e^{-\frac{\delta^2(v^2-\alpha^2)}{2}}dv \\
&&    +\,\frac{\gamma|1-e^{m+\frac{\delta^2}{2}}|}{\pi}
      \int_a^\infty K^{-\alpha+1}|\psi_2(\zeta)|dv \\
&\leq&\frac{\gamma K^{-\alpha+1}}{\pi}\l\{e^{(\alpha+1)m+(\frac{\alpha^2}{2}
      +\alpha+\frac{1}{2})\delta^2}+e^{m\alpha+\frac{\delta^2\alpha^2}{2}}
      +|1-e^{m+\frac{\delta^2}{2}}|\r\}\int_a^\infty|\psi_2(\zeta)|dv \\
&\leq&\frac{4C_1\gamma K}{5\pi\sigma^4(T-t)^2a^5}
      \l(\frac{K}{S_{t-}}\r)^{-\alpha}\l\{e^{(\alpha+1)m+(\frac{\alpha^2}{2}
      +\alpha+\frac{1}{2})\delta^2}+e^{m\alpha+\frac{\delta^2\alpha^2}{2}}
      +|1-e^{m+\frac{\delta^2}{2}}|\r\} \\
&\leq&\ve\,.
\end{eqnarray*}}
\end{proof}

\subsection{Numerical results}
As seen in the previous subsection,
substituting (\ref{eq-I1}) and (\ref{eq-prop-Mer3}) for $I_1$ and $I_2$
respectively,
we can compute $LRM_t$ given in (\ref{eq-prop-AS}) with the FFT.
Note that we need Proposition~\ref{prop-Mer2}
in order to calculate $\psi_1$, $\psi_2$, and $\tilde{\psi}$.
In this subsection, we provide numerical results
for a Merton jump-diffusion model with parameters
$T=1$, $\mu=-0.7$, $\sigma=0.2$, $\gamma=1$, $m=0$, and $\delta=1$.
Note that $\mu^S$ is given by $-0.03$, which satisfies the second condition of
Assumption~\ref{ass-1}.
In particular, we consider the following two cases:
First, fixing the strike price $K$ to 1, we compute $LRM_t$
for times $t=0, 0.05,\ldots,0.95$.
Second, $t$ is fixed to $0.5$ and we instead vary $K$ from 1 to 8
at steps of 0.25 and compute $LRM_{0.5}$.
Note that we take $L_{t-}=1$ whatever the value of $t$ is taken.
Moreover, we choose $N=2^{14}$, $\eta=0.025$, nad $\alpha=1.75$
as parameters related to the FFT.
We have then $N\eta=409.6$.
For any parameter set mentioned above,
both (\ref{eq-prop-Mer5}) and (\ref{eq-prop-Mer6-1}) are satisfied
for $\epsilon=10^{-2}$.
Figure \ref{Fig-Mer} shows the results for these two cases.
The computation time to obtain Fig.~\ref{Fig-Mer}(b) was 0.59~s.
Note that all numerical experiments in this paper were carried out using
MATLAB (8.1.0.604 R2013a) on an Intel Core i7 3.4~GHz CPU with
16 GB 1333 MHz DDR3 memory.

\clearpage
\begin{figure}
\vspace{-5mm}
\centering
\subfigure[$LRM_t$ of a call option
           with strike price $K=1$ and maturity $T=1$ vs.\ time
           for a Merton jump-diffusion model with parameters $\mu=-0.7$,
           $\sigma=0.2$, $\gamma=1$, $m=0$, and $\delta=1$.
           ]{
           \includegraphics[width=0.9\textwidth]{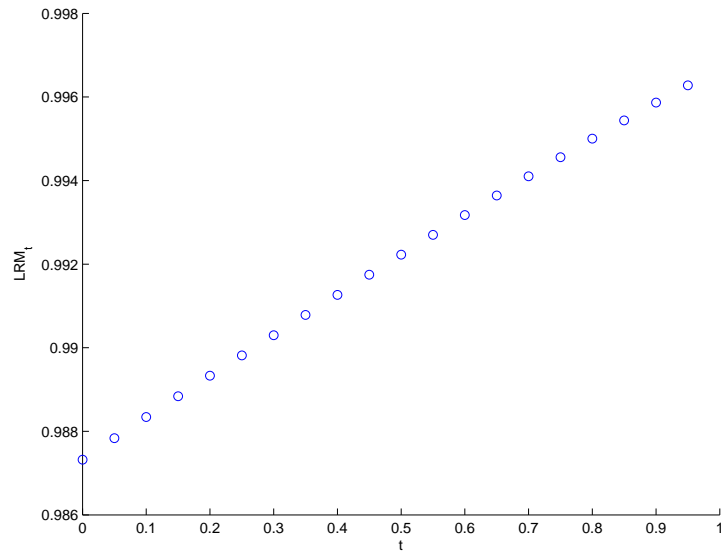}}
\subfigure[$LRM_{0.5}$ for the same Merton jump-diffusion model as (a)
           vs.\ strike price $K$.
           The vertical axis represents the value of $LRM_{0.5}$]{
           \includegraphics[width=0.9\textwidth]{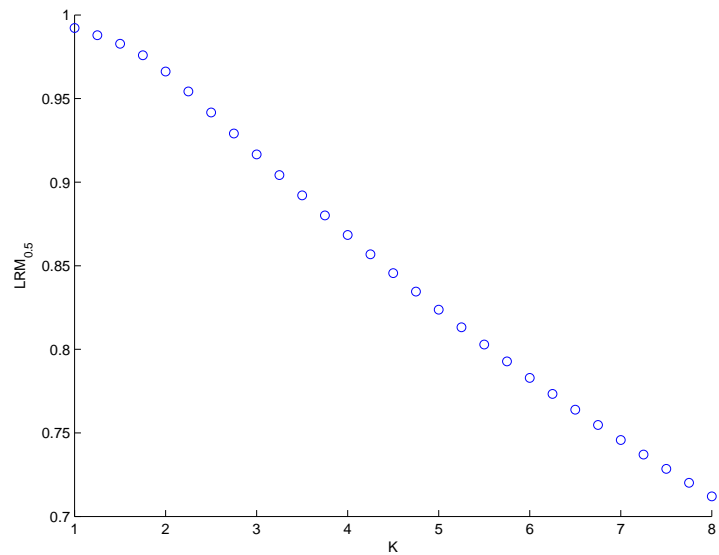}}
\caption{Merton jump-diffusion model}
\label{Fig-Mer}
\end{figure}
\clearpage

\setcounter{equation}{0}
\section{Variance Gamma Models}
We now consider the case in which
$L$ is given as a variance gamma process.
Note that $L$ does not have a diffusion component.
This means that $\sigma=0$, that is, $I_1$ vanishes.
A variance gamma process,
which has three parameters $\kappa>0$, $m\in\bbR$, and $\delta>0$,
is defined as a time-changed Brownian motion with volatility $\delta$,
drift $m$, and subordinator $G_t$, where $G_t$ is a gamma process
with parameters $(1/\kappa, 1/\kappa)$.
In summary, $L$ is represented as
\[
L_t=mG_t+\delta B_{G_t}\ \ \mbox{ for }t\in[0,T]\,,
\]
where $B$ is a one-dimensional standard Brownian motion.
Moreover, the L\'evy measure of $L$ is given by
\[
\nu(dx)
= C(\mathbf{1}_{\{x<0\}}e^{-G|x|}+\mathbf{1}_{\{x>0\}}e^{-M|x|})\frac{dx}{|x|}
= C(\mathbf{1}_{\{x<0\}}e^{Gx}+\mathbf{1}_{\{x>0\}}e^{-Mx})\frac{dx}{|x|},
\]
where
\[
C:=\frac{1}{\kappa}, \qquad
G:=\frac{1}{\delta^2}\sqrt{m^2+\frac{2\delta^2}{\kappa}}+\frac{m}{\delta^2},
\qquad
M:=\frac{1}{\delta^2}\sqrt{m^2+\frac{2\delta^2}{\kappa}}-\frac{m}{\delta^2}.
\]
Note that $C$, $G$, and $M$ are positive.
To emphasize the parameters, we write $\nu$ with parameters $\kappa$, $m$, and
$\delta$ as $\nu(dx)=\nu[\kappa,m,\delta](dx)$.
Moreover, by regarding $C$, $G$, and $M$ as parameters, we may express $\nu$ as
$\nu(dx)=\nu_{C,G,M}(dx)$.
In addition, we assume $M>4$ in this section, which ensures that
the first condition of Assumption~\ref{ass-1} holds, by the following lemma:

\begin{lem}
When $M>4$,
$\int_{\bbR_0}(e^x-1)^n\nu(dx)<\infty$ for $n=2,4$.
\end{lem}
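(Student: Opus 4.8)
The plan is to split $\int_{\bbR_0}(e^x-1)^n\nu(dx)$ into three regions---a neighborhood of the origin, the negative tail, and the positive tail---and bound each separately, since the $1/|x|$ singularity of $\nu$ at the origin and the exponential growth of $(e^x-1)^n$ as $x\to+\infty$ demand different treatments. Throughout I would keep $(e^x-1)^n$ intact rather than expand it by the binomial theorem, because the individual pieces $\int e^{kx}\nu(dx)$ diverge at $0$; it is precisely the factor $(e^x-1)^n\sim x^n$ that tames that singularity.

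First, near the origin I would use the elementary estimate $|e^x-1|\leq e|x|$ for $|x|\leq 1$, so that $(e^x-1)^n\leq e^n|x|^n$. Since the exponential factors $e^{Gx}$ and $e^{-Mx}$ are bounded on $[-1,1]$, we have $\nu(dx)\leq C'\,dx/|x|$ there for some constant $C'>0$, and hence
\[
\int_{0<|x|\leq 1}(e^x-1)^n\nu(dx)\leq C'e^n\int_{0<|x|\leq 1}|x|^{n-1}\,dx<\infty
\]
for every $n\geq 1$. Thus this contribution is finite for both $n=2$ and $n=4$, regardless of $M$.

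Next, on the negative tail $\{x\leq -1\}$ I would note that $0<e^x\leq e^{-1}$, so $|e^x-1|\leq 1$ and $(e^x-1)^n\leq 1$. Since $|x|^{-1}\leq 1$ there, the integrand is dominated by $Ce^{Gx}$, and $\int_{-\infty}^{-1}e^{Gx}\,dx<\infty$ because $G>0$; so this piece is again finite for any $M$.

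The decisive region---and the only place where the hypothesis enters---is the positive tail $\{x\geq 1\}$. Here $(e^x-1)^n\leq e^{nx}$ and $|x|^{-1}\leq 1$, so the integrand is dominated by $Ce^{(n-M)x}$, giving
\[
\int_1^\infty(e^x-1)^n\nu(dx)\leq C\int_1^\infty e^{(n-M)x}\,dx,
\]
which is finite exactly when $M>n$. For $n=4$ this is where $M>4$ is used, and the case $n=2$ (needing only $M>2$) then follows a fortiori. Combining the three regions establishes the claim. The proof is elementary; the only subtlety is recognizing that the binomial expansion must be avoided near $0$ and that $M>4$ is dictated by the worst case $n=4$ at $+\infty$.
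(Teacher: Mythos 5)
Your proof is correct and follows essentially the same route as the paper's: both decompose the integral into a neighborhood of the origin (where $|e^x-1|\leq \mathrm{const}\cdot|x|$ cancels the $1/|x|$ singularity), the negative tail (where $(e^x-1)^n\leq 1$ and $e^{Gx}$ decays), and the positive tail (where $(e^x-1)^n\leq e^{nx}$ forces the condition $M>n$). The only cosmetic difference is that the paper treats $(0,1]$ and $[-1,0)$ as two separate pieces with the bounds $e^x-1\leq x(e-1)$ and $|e^x-1|\leq|x|$, whereas you merge them under the single estimate $|e^x-1|\leq e|x|$.
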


\begin{proof}
For $n=2,4$, we have
\begin{align*}
\int_1^\infty(e^x-1)^n\nu(dx)
&\leq C\int_1^\infty e^{(n-M)x}dx<\infty\,, \\
\int_0^1(e^x-1)^n\nu(dx)
&\leq \int_0^1x^n(e-1)^n\nu(dx)\leq C(e-1)^n<\infty\,, \\
\int_{-1}^0(e^x-1)^n\nu(dx)
&\leq \int_{-1}^0(-x)^n\nu(dx)\leq C\int_{-1}^0(-x)^{n-1}dx<\infty\,, \\
\int_{-\infty}^{-1}(e^x-1)^n\nu(dx)
&\leq \int_{-\infty}^{-1}\nu(dx)\leq C\int_1^\infty e^{-Gx}dx<\infty\,,
\end{align*}
because $n-M<0$, $0\leq e^x-1\leq x(e-1)$ whenever $x\in[0,1]$,
$1+x\leq e^x$ for any $x\in\bbR$, and $e^x\leq1$ if $x\leq0$.
\end{proof}

\begin{rem}
We can generalize this lemma to
$\int_{\bbR_0}|e^x-1|^a\nu(dx)<\infty$ for any $a\in[1,M)$.
\end{rem}

\noindent
Because $\mu=\int_{\bbR_0}x\nu(dx)$,
(\ref{eq-lem-VG4}) below implies that
the second condition of Assumption~\ref{ass-1} can rewritten as
\[
\log\l(\frac{(M-1)(G+1)}{(M-2)(G+2)}\r)>0\geq \log\l(\frac{MG}{(M-1)(G+1)}\r),
\]
which is equivalent to $-3<G-M\leq-1$.

\subsection{Mathematical preliminaries}
The approach to variance gamma models is similar to that in Subsection~3.1.
We begin by calculating of $\nu^{\tP}$.

\begin{prop}
\label{prop-VG0}
$\nu^{\tP}(dx)=\nu_{(1+h)C,G,M}(dx)+\nu_{-hC,G+1,M-1}(dx)$,
where $h=\frac{\mu^S}{\int_{\bbR_0}(e^x-1)^2\nu(dx)}$.
\end{prop}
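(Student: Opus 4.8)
The plan is to mirror the argument of Proposition~\ref{prop-Mer1}, exploiting the fact that the variance gamma L\'evy measure $\nu_{C,G,M}$ depends linearly on $C$ and that multiplication by $e^x$ merely shifts the exponential decay rates $G$ and $M$. First I would specialize the density factor $1-\theta_x$ to the present setting. Since a variance gamma process has no diffusion component, $\sigma=0$, so
\[
\theta_x=\frac{\mu^S(e^x-1)}{\int_{\bbR_0}(e^y-1)^2\nu(dy)}=h(e^x-1),
\]
with $h$ as stated. This gives immediately
\[
\nu^{\tP}(dx)=(1-\theta_x)\nu(dx)=(1+h)\nu(dx)-he^x\nu(dx).
\]

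The core computation is then to identify $e^x\nu(dx)$. Writing out the two branches of $\nu_{C,G,M}$ and absorbing the factor $e^x$ into each exponential, I would show
\[
e^x\nu(dx)
=C\l(\mathbf{1}_{\{x<0\}}e^{(G+1)x}+\mathbf{1}_{\{x>0\}}e^{-(M-1)x}\r)\frac{dx}{|x|}
=\nu_{C,G+1,M-1}(dx),
\]
since for $x<0$ we have $e^x\cdot e^{Gx}=e^{(G+1)x}$ and for $x>0$ we have $e^x\cdot e^{-Mx}=e^{-(M-1)x}$, while the $1/|x|$ weight is untouched. Substituting this back and using that $\lambda\,\nu_{C,G,M}=\nu_{\lambda C,G,M}$ for any scalar $\lambda$ yields
\[
\nu^{\tP}(dx)=(1+h)\nu_{C,G,M}(dx)-h\,\nu_{C,G+1,M-1}(dx)
=\nu_{(1+h)C,G,M}(dx)+\nu_{-hC,G+1,M-1}(dx),
\]
which is the claimed identity.

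There is essentially no deep obstacle here; the statement is a direct change-of-measure computation. The only points requiring a word of care are bookkeeping ones: verifying that the $e^x$ factor shifts \emph{both} branches consistently (raising the left-tail rate $G$ to $G+1$ and lowering the right-tail rate $M$ to $M-1$), and noting that the resulting parameters remain admissible, i.e.\ $G+1>0$ and $M-1>0$, which hold since $G>0$ and $M>4$ by assumption. I would also remark that $-hC\ge0$ as a coefficient poses no problem, because $h\le0$ by Assumption~\ref{ass-1} (indeed $0\ge h>-1$ exactly as in the Merton case), so $\nu^{\tP}$ is a nonnegative measure.
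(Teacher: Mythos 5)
Your proof is correct and follows essentially the same route as the paper's: both reduce to $\nu^{\tP}(dx)=(1+h)\nu(dx)-he^{x}\nu(dx)$ (the paper cites the argument of Proposition~\ref{prop-Mer1} here), then absorb $e^{x}$ into the two exponential branches to get $e^{x}\nu_{C,G,M}(dx)=\nu_{C,G+1,M-1}(dx)$, and conclude by the scaling $\lambda\nu_{C,G,M}=\nu_{\lambda C,G,M}$. Your added remarks on parameter admissibility ($M-1>0$) and on $0\geq h>-1$ match the points the paper invokes implicitly or briefly.
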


\begin{proof}
By the same argument as Proposition~\ref{prop-Mer1}, 
$\nu^{\bbP^*}(dx)=(1+h)\nu(dx)-he^x\nu(dx)$.
We have $\lambda\nu_{C,G,M}(dx)=\nu_{\lambda C,G,M}(dx)$ for any $\lambda>0$,
and
\begin{align*}
e^x\nu_{C,G,M}(dx)
&= e^xC(\mathbf{1}_{\{x<0\}}e^{Gx}+\mathbf{1}_{\{x>0\}}e^{-Mx})\frac{dx}{|x|}
   \\
&= C(\mathbf{1}_{\{x<0\}}e^{(G+1)x}+\mathbf{1}_{\{x>0\}}e^{-(M-1)x})
   \frac{dx}{|x|}
&= \nu_{C,G+1,M-1}(dx)
\end{align*}
because $M-1>0$.
\end{proof}

\begin{rem}
For any $\lambda>0$, $\lambda\nu[\kappa,m,\delta](dx)$ is
a L\'evy measure corresponding to the variance gamma process
with parameters $\kappa/\lambda$, $\lambda m$, and $\delta\sqrt{\lambda}$.
However, $\nu_{C,G+1,M-1}(dx)$ is \textit{not} necessarily a L\'evy measure
corresponding to a variance gamma process.
\end{rem}

Next we calculate the characteristic function $\phi_{T-t}$ of $L$ under $\tP$:

\begin{prop}
\label{prop-VG1}
For any $t\in[0,T]$ and $v\in\bbR$, with $\zeta := v-i\alpha$, we have
{\small
\begin{eqnarray*}
\phi_{T-t}(\zeta)
&=& \l[\l(1+\frac{i\zeta}{G}\r)\l(1-\frac{i\zeta}{M}\r)\r]^{-(1 + h)(T-t)C}
    \l[\l(1+\frac{i\zeta}{G+1}\r)\l(1-\frac{i\zeta}{M-1}\r)\r]^{h(T-t)C} \\
&&  \times\exp\l\{(T-t)i\zeta\l[\mu^\ast+(1+h)C\frac{M-G}{GM}
    -hC\frac{M-G-2}{(G+1)(M-1)}\r]\r\},
\end{eqnarray*}}
where $\mu^\ast=\int_{\bbR_0}(x-e^x+1)\nu^{\tP}(dx)$.
\end{prop}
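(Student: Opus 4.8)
The plan is to follow the template of Proposition~\ref{prop-Mer2}, replacing the Gaussian jump integrals by the variance-gamma ones. Since $\sigma=0$ for the variance gamma process, the L\'evy--Khintchine representation under $\tP$ reads
\[
\phi_{T-t}(\zeta)
=\exp\l\{(T-t)\l[i\zeta\mu^*
  +\int_{\bbR_0}(e^{i\zeta x}-1-i\zeta x)\nu^{\tP}(dx)\r]\r\},
\]
with $\mu^*=\int_{\bbR_0}(x-e^x+1)\nu^{\tP}(dx)$. First I would insert the decomposition $\nu^{\tP}=\nu_{(1+h)C,G,M}+\nu_{-hC,G+1,M-1}$ from Proposition~\ref{prop-VG0}, so that the whole computation splits into one generic integral evaluated at two parameter triples, plus a linear drift correction coming from the $-i\zeta x$ term.

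The key step is the evaluation of $\int_{\bbR_0}(e^{i\zeta x}-1)\nu_{C,G,M}(dx)$ for generic $C,G,M$. Splitting into the two half-lines and writing the density as $Ce^{-Mx}/x$ on $(0,\infty)$ and $Ce^{Gx}/(-x)$ on $(-\infty,0)$, each half reduces, after the substitution $y=-x$ on the negative side, to a Frullani-type integral $\int_0^\infty\frac{e^{-ax}-e^{-bx}}{x}\,dx=\log(b/a)$. This gives
\[
\int_{\bbR_0}(e^{i\zeta x}-1)\nu_{C,G,M}(dx)
=-C\log\l[\l(1+\frac{i\zeta}{G}\r)\l(1-\frac{i\zeta}{M}\r)\r],
\]
and exponentiating the two contributions (the weight $-hC$ of the second measure flipping the sign of its logarithm) produces precisely the two power-law factors in the statement, with exponents $-(1+h)(T-t)C$ and $h(T-t)C$.

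For the drift, the $-i\zeta x$ term contributes $-i\zeta\int_{\bbR_0}x\,\nu^{\tP}(dx)$, and a direct computation gives $\int_{\bbR_0}x\,\nu_{C,G,M}(dx)=C\l(\frac1M-\frac1G\r)=C\frac{G-M}{GM}$. Applied to the two parameter triples and added to $i\zeta\mu^*$, this yields $i\zeta\big[\mu^*+(1+h)C\frac{M-G}{GM}-hC\frac{M-G-2}{(G+1)(M-1)}\big]$, matching the exponential drift factor in the proposition.

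The main obstacle is justifying the Frullani evaluation for the complex argument $\zeta=v-i\alpha$. Since $i\zeta=\alpha+iv$, convergence of $\int_0^\infty(e^{i\zeta x}-1)e^{-Mx}\frac{dx}{x}$ at infinity requires $\mathrm{Re}(M-i\zeta)=M-\alpha>0$, and the $(M-1)$-analogue arising from the second measure requires $M-1-\alpha>0$; both hold because $\alpha\in(1,2]$ while $M>4$ in this section, and the negative-half integrals are unconditionally convergent since $G,\alpha>0$. Convergence near $0$ is automatic, as $e^{i\zeta x}-1=O(|x|)$ cancels the $1/|x|$ singularity of the L\'evy density. The complex Frullani identity then follows from the real one by analytic continuation in the parameter $a$ (equivalently, by differentiating under the integral sign), so no estimate beyond these domain checks is needed.
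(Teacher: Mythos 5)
Your proposal follows essentially the same route as the paper's proof: invoke the L\'evy--Khintchine representation under $\tP$ with $\sigma=0$, insert the decomposition $\nu^{\tP}=\nu_{(1+h)C,G,M}+\nu_{-hC,G+1,M-1}$ from Proposition~\ref{prop-VG0}, evaluate $\int_{\bbR_0}(e^{i\zeta x}-1)\nu_{C,G,M}(dx)=-C\log\l[\l(1+\frac{i\zeta}{G}\r)\l(1-\frac{i\zeta}{M}\r)\r]$ by Frullani-type integrals, and add the drift correction $\int_{\bbR_0}x\,\nu_{C,G,M}(dx)=C\frac{G-M}{GM}$, with the same convergence checks ($G+\alpha>0$, $M-\alpha>0$, $M-1-\alpha>0$). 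The only difference is cosmetic: where you justify the complex-argument Frullani identity by analytic continuation in the parameter, the paper proves it directly by writing the difference of exponentials as an iterated integral and applying Fubini (its display (\ref{Frullani})); both justifications are valid.
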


\begin{proof}
First of all, we have
\begin{align}
\int_0^\infty(e^{i\zeta x}-1)\frac{e^{-Mx}}{x}\,dx
&= \int_0^\infty\frac{e^{-(M-\alpha-iv)x}-e^{-Mx}}{x}\,dx \nonumber \\
&= \int_0^\infty\frac{e^{-(M-\alpha-iv)x}-e^{-(M-\alpha)x}
   +e^{-(M-\alpha)x}-e^{-Mx}}{x}\,dx \nonumber \\
&= i\int_0^\infty e^{-(M-\alpha)x}\int_0^ve^{itx}\,dt\,dx
   +\int_0^\infty\int_{M-\alpha}^Me^{-tx}\,dt\,dx \nonumber \\
&= i\int_0^v\int_0^\infty e^{-(M-\alpha-it)x}\,dx\,dt
   +\int_{M-\alpha}^M\int_0^\infty e^{-tx}\,dx\,dt \nonumber \\
\label{Frullani}
&= \log\l(\frac{M-\alpha}{M-\alpha-iv}\r)+\log\l(\frac{M}{M-\alpha}\r)
= -\log\l(1-\frac{i\zeta}{M}\r),
\end{align}
which provides
\begin{align*}
\int_{\bbR_0}(e^{i\zeta x}-1)\nu_{C,G,M}(dx)
&= C\int_{-\infty}^0(e^{i\zeta x}-1)\frac{e^{Gx}}{-x}\,dx
   +C\int_0^\infty(e^{i\zeta x}-1)\frac{e^{-Mx}}{x}\,dx \\
&= -C\l(\log\l(1+\frac{i\zeta}{G}\r)+\log\l(1-\frac{i\zeta}{M}\r)\r).
\end{align*}
In addition, we have
\[
\int_{\bbR_0}x\nu_{C,G,M}(dx)
= -C\int_{-\infty}^0e^{Gx}\,dx+C\int_0^\infty e^{-Mx}\,dx
= -C\frac{M-G}{GM}.
\]
Together with Proposition~\ref{prop-VG0}, we obtain
\begin{eqnarray*}
\int_{\bbR_0}(e^{i\zeta x}-1-i\zeta x)\nu^{\tP}(dx)
&=& \log\l(1+\frac{i\zeta}{G}\r)^{-(1+h)C}
    +\log\l(1-\frac{i\zeta}{M}\r)^{-(1+h)C} \\
&&  +\log\l(1+\frac{i\zeta}{G+1}\r)^{hC}+\log\l(1-\frac{i\zeta}{M-1}\r)^{hC} \\
&&  +\,i(1+h)C\zeta\frac{M-G}{GM}-ihC\zeta\frac{M-G-2}{(G+1)(M-1)},
\end{eqnarray*}
from which Proposition~\ref{prop-VG1} follows.
\end{proof}

Now, we reformulate (\ref{eq-I2-2}) into a linear combination of
two Fourier transforms in order to allow use of the FFT.
As preparation, we show the following:

\begin{lem}
\label{lem-VG4}
\begin{equation}
\label{eq-lem-VG4}
\int_{\bbR_0}e^{i\zeta x}(e^x-1)\nu(dx)
=C\log\l(\frac{M-i\zeta}{M-1-i\zeta}\frac{G+i\zeta}{G+1+i\zeta}\r).
\end{equation}
\end{lem}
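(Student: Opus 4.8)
The plan is to exploit the elementary identity $e^{i\zeta x}(e^x-1) = e^{(i\zeta+1)x} - e^{i\zeta x}$, which turns the left-hand side of (\ref{eq-lem-VG4}) into the integral of a difference of two exponentials against $\nu=\nu_{C,G,M}$. Since $\nu$ has density $C\,e^{-Mx}/x$ on $(0,\infty)$ and $C\,e^{Gx}/(-x)$ on $(-\infty,0)$, each half-line contribution reduces to a Frullani integral of exactly the type already carried out in (\ref{Frullani}), namely $\int_0^\infty \frac{e^{-ax}-e^{-bx}}{x}\,dx = \log(b/a)$ for complex $a,b$ with positive real parts. A useful observation is that the combined integrand $e^{(i\zeta+1)x}-e^{i\zeta x}$ vanishes to first order at $x=0$, so the $1/|x|$ singularity of $\nu$ causes no trouble and no regularization by subtracting constants (as was needed in the computation of $\phi_{T-t}$) is required here.

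Concretely, I would split $\int_{\bbR_0}$ into $\int_0^\infty$ and $\int_{-\infty}^0$. On the positive axis, combining the two exponentials with the factor $e^{-Mx}$ gives $C\int_0^\infty \frac{e^{-(M-1-i\zeta)x}-e^{-(M-i\zeta)x}}{x}\,dx = C\log\big(\frac{M-i\zeta}{M-1-i\zeta}\big)$. On the negative axis, the substitution $y=-x$ produces $C\int_0^\infty \frac{e^{-(G+1+i\zeta)y}-e^{-(G+i\zeta)y}}{y}\,dy = C\log\big(\frac{G+i\zeta}{G+1+i\zeta}\big)$. Adding the two and combining logarithms yields the claimed right-hand side $C\log\big(\frac{M-i\zeta}{M-1-i\zeta}\frac{G+i\zeta}{G+1+i\zeta}\big)$.

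The one point requiring care — and the main obstacle — is verifying the hypotheses of the Frullani formula on the positive axis, where both exponents must have positive real part. Since $\zeta=v-i\alpha$ we have $\mathrm{Re}(i\zeta)=\alpha$, so the exponents $M-i\zeta$ and $M-1-i\zeta$ have real parts $M-\alpha$ and $M-1-\alpha$; these are positive precisely because $\alpha\in(1,2]$ together with the standing assumption $M>4$ give $M-1-\alpha>4-1-2=1>0$. This is exactly where $M>4$ is used. On the negative axis the exponents $G+i\zeta$ and $G+1+i\zeta$ have real parts $G+\alpha$ and $G+1+\alpha$, which are positive unconditionally since $G>0$ and $\alpha>0$, so that contribution needs no extra hypothesis. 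The only remaining bookkeeping is to confirm that the principal branch of the logarithm is the correct one throughout, which follows because under these sign conditions each ratio $b/a$ stays in the right half-plane.
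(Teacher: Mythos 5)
Your proof is correct and is essentially the paper's own argument: the paper likewise splits $\bbR_0$ into the two half-lines, writes $e^{i\zeta x}(e^x-1)$ as a difference of exponentials divided by $x$, and evaluates each half-line contribution as a Frullani-type integral, with $M-\alpha>2$ (from $M>4$, $\alpha\in(1,2]$) and $G>0$ guaranteeing convergence. The only difference is presentational: where you invoke the complex-parameter Frullani formula directly (with the right-half-plane observation handling the branch of the logarithm), the paper re-derives that evaluation by hand, splitting into cosine and sine integrals, applying Fubini, and recombining via $\tan^{-1}x=\frac{i}{2}\log\frac{i+x}{i-x}$.
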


\begin{proof}
First of all, we have
\begin{align}
\lefteqn{\int_{\bbR_0}e^{i\zeta x}(e^x-1)\nu(dx)} \nonumber \\
&= \int_{\bbR_0}e^{(iv+\alpha)x}(e^x-1)\nu(dx) \nonumber \\
&= C\l\{\int_0^\infty\frac{1-e^x}{x}e^{-(G+\alpha+1+iv)x}\,dx
    +\int_0^\infty\frac{e^x-1}{x}e^{-(M-\alpha-iv)x}\,dx\r\}.
\label{eq-lem-VG4-1}
\end{align}
To calculate (\ref{eq-lem-VG4-1}), we compute
\[
\int_0^\infty\frac{e^x-1}{x}e^{-ax}\cos{bx}\,dx
\qquad\mbox{ and }\qquad
\int_0^\infty\frac{e^x-1}{x}e^{-ax}\sin{bx}\,dx
\]
for $a>1$ and $b\in\bbR$.
First, we have
\begin{eqnarray}
\lefteqn{\int_0^\infty\frac{e^x-1}{x}e^{-ax}\cos{bx}\,dx} \nonumber \\
&=& \int_0^\infty\frac{\cos{bx}}{x}\int_{a-1}^axe^{-tx}\,dt\,dx
    =\int_{a-1}^a\int_0^\infty\cos{bx}\cdot e^{-tx}\,dx\,dt \nonumber \\
&=& \int_{a-1}^a\frac{t}{t^2+b^2}dt
    =\frac{1}{2}\log\l(\frac{a^2+b^2}{(a-1)^2+b^2}\r).
\label{eq-lem-VG4-2}
\end{eqnarray}
A similar calculation implies that
\begin{equation}
\label{eq-lem-VG4-3}
\int_0^\infty\frac{e^x-1}{x}e^{-ax}\sin{bx}\,dx
=\int_{a-1}^a\frac{b}{t^2+b^2}\,dt=\tan^{-1}\frac{a}{b}-\tan^{-1}\frac{a-1}{b}.
\end{equation}
Noting that $M-\alpha>2$ and $\tan^{-1}x=\frac{i}{2}\log\frac{i+x}{i-x}$
for $x\in\bbR$, we have, by (\ref{eq-lem-VG4-2}) and (\ref{eq-lem-VG4-3}),
\begin{align}
\lefteqn{\int_0^\infty\frac{e^x-1}{x}e^{-(M-\alpha-iv)x}\,dx} \nonumber \\
&\qquad= \int_0^\infty\frac{e^x-1}{x}e^{-(M-\alpha)x}\cos{vx}\,dx
    +i\int_0^\infty\frac{e^x-1}{x}e^{-(M-\alpha)x}\sin{vx}\,dx \nonumber \\
&\qquad= \frac{1}{2}\log\l(\frac{(M-\alpha)^2+v^2}{(M-\alpha-1)^2+v^2}\r)
    +i\l(\tan^{-1}\frac{M-\alpha}{v}-\tan^{-1}\frac{M-\alpha-1}{v}\r)
    \nonumber \\
&\qquad= \log\l(\frac{M-\alpha-iv}{M-\alpha-1-iv}\r).
\label{eq-lem-VG4-4}
\end{align}
Calculating the first term of the right-hand side of (\ref{eq-lem-VG4-1})
in the same way as the above, we obtain
\begin{equation}
\label{eq-lem-VG4-5}
\int_0^\infty\frac{1-e^x}{x}e^{-(G+\alpha+1+iv)x}\,dx
=\log\l(\frac{G+\alpha+iv}{G+\alpha+1+iv}\r).
\end{equation}
Substituting (\ref{eq-lem-VG4-4}) and (\ref{eq-lem-VG4-5}) for
(\ref{eq-lem-VG4-1}), we arrive at (\ref{eq-lem-VG4}).
\end{proof}

\noindent
From the above lemma, $I_2$ is given as follows:
{\small 
\begin{eqnarray}
\lefteqn{\int_{\bbR_0}\bbE_{\tP}[(S_Te^x-K)^+-(S_T-K)^+\mid\calF_{t-}]
(e^x-1)\nu(dx)} \nonumber \\
&\qquad=& \frac{1}{\pi}\int^\infty_0K^{-i\zeta+1}
    \int_{\bbR_0}(e^{i\zeta x}-1)(e^x-1)\nu(dx)\psi_2(\zeta)dv \nonumber \\
&\qquad=& \frac{1}{\pi}\int^\infty_0K^{-i\zeta+1}\widetilde\psi_{VG}(\zeta)dv
    -\frac{1}{\pi}\int^\infty_0 C\log\l(\frac{MG}{(M-1)(G+1)}\r)K^{-i\zeta+1}
    \psi_2(\zeta)dv. \nonumber \\
\label{eq-VG-I2}
\end{eqnarray}}
where
\[
\widetilde\psi_{VG}(\zeta):=
C\log\l(\frac{M-i\zeta}{M-1-i\zeta}\frac{G+i\zeta}{G+1+i\zeta}\r)\psi_2(\zeta).
\]
Recall that
$\psi_2(\zeta)=\frac{\phi_{T-t}(\zeta)S_{t-}^{i\zeta}}{(i\zeta-1)i\zeta}$.
As a result, we need only use the FFT twice for computing $I_2$.

As the final item of this subsection,
we estimate a sufficient length for the integration interval of
(\ref{eq-VG-I2}) for a given allowable error $\ve>0$
in the sense of (\ref{eq-CM3}).
We first provide an upper estimate of $\phi_{T-t}$ as follows:

\begin{prop}
\label{prop-VG2}
For any $v\in\bbR$,
\[
|\phi_{T-t}(v-i\alpha)|\leq C_2|v|^{-2C(T-t)},
\]
where
\begin{eqnarray}
\label{eq-prop-VG2}
C_2 &=& (GM)^{(1+h)(T-t)C}[(G+1)(M-1)]^{-h(T-t)C} \nonumber \\
    &&  \times\exp\l\{(T-t)\alpha\l[\mu^\ast+(1+h)C\frac{M-G}{GM}
        -hC\frac{M-G-2}{(G+1)(M-1)}\r]\r\}.
\end{eqnarray}
\end{prop}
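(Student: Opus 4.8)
The plan is to start from the closed form of $\phi_{T-t}(\zeta)$ obtained in Proposition~\ref{prop-VG1} and simply take its modulus, exploiting that for a complex number $w$ and a real exponent $a$ one has $|w^a|=|w|^a$. Writing $\zeta=v-i\alpha$, so that $i\zeta=\alpha+iv$, the exponential factor in Proposition~\ref{prop-VG1} has exponent $(T-t)i\zeta\beta$ with $\beta:=\mu^\ast+(1+h)C\frac{M-G}{GM}-hC\frac{M-G-2}{(G+1)(M-1)}\in\bbR$; since the real part of $i\zeta\beta=(\alpha+iv)\beta$ is $\alpha\beta$, its modulus equals $\exp\{(T-t)\alpha\beta\}$, which is exactly the exponential factor appearing in $C_2$ of~(\ref{eq-prop-VG2}).

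It then remains to control the two bracketed power terms. First I would record the elementary moduli
\[
\l|1+\frac{i\zeta}{G}\r|=\frac{\sqrt{(G+\alpha)^2+v^2}}{G},\qquad
\l|1-\frac{i\zeta}{M}\r|=\frac{\sqrt{(M-\alpha)^2+v^2}}{M},
\]
together with the analogous identities in which $G,M$ are replaced by $G+1,M-1$. The crucial structural observation is that \emph{both} exponents are non-positive: indeed $-(1+h)(T-t)C<0$ and $h(T-t)C\le 0$, because $0\ge h>-1$ by Assumption~\ref{ass-1} (cf.\ the proof of Proposition~\ref{prop-VG0}) and $C,T-t>0$. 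Consequently, to upper-bound each power I only need a \emph{lower} bound on its base, and the trivial inequality $\sqrt{X^2+v^2}\ge|v|$ supplies one: it gives $\l|(1+\frac{i\zeta}{G})(1-\frac{i\zeta}{M})\r|\ge v^2/(GM)$ and $\l|(1+\frac{i\zeta}{G+1})(1-\frac{i\zeta}{M-1})\r|\ge v^2/[(G+1)(M-1)]$.

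Raising these to the non-positive exponents reverses the inequalities, so that
\[
\l|\l(1+\frac{i\zeta}{G}\r)\l(1-\frac{i\zeta}{M}\r)\r|^{-(1+h)(T-t)C}\le (GM)^{(1+h)(T-t)C}|v|^{-2(1+h)(T-t)C},
\]
and similarly the second bracket is at most $[(G+1)(M-1)]^{-h(T-t)C}|v|^{2h(T-t)C}$. Multiplying the two and summing the exponents of $|v|$ yields $-2(1+h)(T-t)C+2h(T-t)C=-2C(T-t)$, while the constants combine to the prefactor $(GM)^{(1+h)(T-t)C}[(G+1)(M-1)]^{-h(T-t)C}$ of $C_2$; together with the exponential factor this is precisely the claimed bound. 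The only point requiring care is the bookkeeping of signs: one must verify that each exponent is non-positive \emph{before} invoking the monotonicity of $x\mapsto x^a$, so that the direction of every estimate is correct, and then check that the surviving power of $|v|$ sums to $-2C(T-t)$. Note finally that for small $|v|$ the bound is vacuous, since its right-hand side diverges; the estimate is thus informative only in the tail $|v|\to\infty$, which is exactly the regime needed for the error control~(\ref{eq-CM3}).
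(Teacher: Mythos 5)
Your proof is correct and follows essentially the same route as the paper: the paper's (very terse) proof consists precisely of the key inequality $\l|\l(1+\frac{iv+\alpha}{G}\r)^{-a}\r|\leq G^a/|v|^a$ for $a>0$, applied to all four factors of the Proposition~\ref{prop-VG1} representation, with the exponential factor contributing $\exp\{(T-t)\alpha\beta\}$ exactly as you computed. Your write-up just makes explicit the sign bookkeeping (both exponents non-positive since $0\ge h>-1$) and the summation of the $|v|$-exponents to $-2C(T-t)$, which the paper leaves to the reader.
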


\begin{proof}
This can be seen because
\[
\l|\l(1+\frac{iv+\alpha}{G}\r)^{-a}\r|\leq\frac{G^a}{|v|^a}
\]
for any $a>0$.
\end{proof}

\noindent
We need to prepare one more lemma:

\begin{lem}
\label{lem-VG5}
\begin{equation}
\label{eq-lem-VG5}
\l|\int_{\bbR_0}e^{i\zeta x}(e^x-1)\nu(dx)\r|
\leq C\l\{\frac{1}{G+\alpha}+\frac{1}{M-\alpha-1}\r\}.
\end{equation}
\end{lem}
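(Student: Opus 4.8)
The statement to prove is Lemma~\ref{lem-VG5}, namely
\[
\l|\int_{\bbR_0}e^{i\zeta x}(e^x-1)\nu(dx)\r|
\leq C\l\{\frac{1}{G+\alpha}+\frac{1}{M-\alpha-1}\r\}.
\]
The plan is \emph{not} to bound the closed-form expression from Lemma~\ref{lem-VG4} directly, since estimating the modulus of the complex logarithm $C\log\big(\frac{M-i\zeta}{M-1-i\zeta}\frac{G+i\zeta}{G+1+i\zeta}\big)$ is awkward and would not obviously produce the stated shape $\frac{1}{G+\alpha}+\frac{1}{M-\alpha-1}$. Instead I would return to the integral representation already established in the proof of Lemma~\ref{lem-VG4}, i.e.\ equation~(\ref{eq-lem-VG4-1}),
\[
\int_{\bbR_0}e^{i\zeta x}(e^x-1)\nu(dx)
= C\l\{\int_0^\infty\frac{1-e^x}{x}e^{-(G+\alpha+1+iv)x}\,dx
    +\int_0^\infty\frac{e^x-1}{x}e^{-(M-\alpha-iv)x}\,dx\r\},
\]
and estimate the two Frullani-type integrals separately.

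The key step is the elementary identity $\frac{e^x-1}{x}e^{-ax}=\int_{a-1}^a e^{-sx}\,ds$, valid for $a>1$ (it is just $\frac{1}{x}(e^{-(a-1)x}-e^{-ax})$ written as an integral). Applying it with $a=M-\alpha$ to the second term and with $a=G+\alpha+1$ to the first, then swapping the order of integration by Fubini and integrating out $x$, I would obtain unit-length line integrals
\[
\int_0^\infty\frac{e^x-1}{x}e^{-(M-\alpha-iv)x}\,dx=\int_{M-\alpha-1}^{M-\alpha}\frac{ds}{s-iv},
\qquad
\int_0^\infty\frac{1-e^x}{x}e^{-(G+\alpha+1+iv)x}\,dx=-\int_{G+\alpha}^{G+\alpha+1}\frac{ds}{s+iv}.
\]
Now the bound is immediate: on each interval one has $\big|\frac{1}{s\mp iv}\big|=\frac{1}{\sqrt{s^2+v^2}}\le\frac1s$, and since each interval has length one and lies to the right of $M-\alpha-1$, respectively $G+\alpha$, the integrand is dominated by $\frac{1}{M-\alpha-1}$, respectively $\frac{1}{G+\alpha}$, giving $\big|\int_{M-\alpha-1}^{M-\alpha}\frac{ds}{s-iv}\big|\le\frac{1}{M-\alpha-1}$ and $\big|\int_{G+\alpha}^{G+\alpha+1}\frac{ds}{s+iv}\big|\le\frac{1}{G+\alpha}$. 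The triangle inequality applied to~(\ref{eq-lem-VG4-1}) then yields the claim.

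The only delicate point is justifying the manipulations, and this reduces to checking positivity and integrability. For the Fubini swap and the dominations above I need $G+\alpha>0$ and $M-\alpha-1>0$ (so that the interval of integration stays in the open right half-line and $\frac1s$ is a genuine upper bound). These hold under the standing hypotheses $\alpha\in(1,2]$, $G>0$, and $M>4$: indeed $G+\alpha>0$ trivially, and $M-\alpha-1>4-2-1=1>0$. Note that the proof of Lemma~\ref{lem-VG4} already invoked $M-\alpha>2$, so the same regime is in force here; absolute convergence of the $x$-integrals (the integrand decays like $e^{-(M-\alpha-1)x}$ at $\infty$ and is bounded near $0$) then legitimizes Fubini. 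I expect no genuine obstacle beyond bookkeeping these inequalities.
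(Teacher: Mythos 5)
Your proof is correct and takes essentially the same route as the paper: both arguments start from the representation~(\ref{eq-lem-VG4-1}), apply the triangle inequality to the two terms, and bound each by a Frullani-type estimate, using exactly the inequalities $G+\alpha>0$ and $M-\alpha-1>0$ that you check. The only difference is in the ordering of steps: the paper takes the modulus inside each integral first (killing the factor $e^{\pm ivx}$), evaluates the resulting real Frullani integrals exactly as $\log\bigl(1+\tfrac{1}{G+\alpha}\bigr)$ and $\log\bigl(1+\tfrac{1}{M-\alpha-1}\bigr)$, and then uses $\log(1+u)\le u$, whereas you keep the complex factor, rewrite each integrand as a unit-length line integral via Fubini, and bound that directly, arriving at the identical estimate.
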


\begin{proof}
The same sort of calculations as in (\ref{Frullani}) imply
\begin{eqnarray*}
\lefteqn{\l|\int_{\bbR_0}e^{i\zeta x}(e^x-1)\nu(dx)\r|} \\
&\qquad\leq& C\l\{\l|\int_0^\infty\frac{1-e^x}{x}e^{-(G+\alpha+1+iv)x}\,dx\r|
       +\l|\int_0^\infty\frac{e^x-1}{x}e^{-(M-\alpha-iv)x}\,dx\r|\r\} \\
&\qquad\leq& C\l\{\int_0^\infty\frac{e^x-1}{x}e^{-(G+\alpha+1)x}\,dx
       +\int_0^\infty\frac{e^x-1}{x}e^{-(M-\alpha)x}\,dx\r\} \\
&\qquad=&    C\l\{\log\l(1+\frac{1}{G+\alpha}\r)
       +\log\l(1+\frac{1}{M-\alpha-1}\r)\r\} \\
&\qquad\leq& C\l\{\frac{1}{G+\alpha}+\frac{1}{M-\alpha-1}\r\}.
\end{eqnarray*}
\end{proof}

\noindent
When we calculate (\ref{eq-VG-I2}),
$N$ and $\eta$ should be taken so that $N\eta$ satisfies (\ref{eq-prop-VG3-1})
below for a given allowable error $\ve>0$.

\begin{prop}
\label{prop-VG3}
Let $\ve>0$.
When $a>0$ satisfies
{\small
\begin{align}
\label{eq-prop-VG3-1}
\frac{C C_2 K^{-\alpha+1}S_{t-}^\alpha}{\pi\ve(2C(T-t)+1)}
\l[\frac{1}{G+\alpha}+\frac{1}{M-\alpha-1}
+\l|\log\l(\frac{MG}{(M-1)(G+1)}\r)\r|\r]
\leq a^{2C(T-t)+1},
\end{align}}
we have
\begin{equation}
\label{eq-prop-VG3-2}
\l|\frac{1}{\pi}\int^\infty_aK^{-i\zeta+1}
\int_{\bbR_0}(e^{i\zeta x}-1)(e^x-1)\nu(dx)\psi_2(\zeta)dv\r|
<\ve\,,
\end{equation}
where $C_2$ is defined in (\ref{eq-prop-VG2}).
\end{prop}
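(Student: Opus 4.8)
The plan is to mirror the proof of Proposition~\ref{prop-Mer6}, replacing the Gaussian tail estimate used there by the polynomial decay available in the variance gamma setting. The integrand in (\ref{eq-prop-VG3-2}) factors as the product of $K^{-i\zeta+1}$, the L\'evy integral $\int_{\bbR_0}(e^{i\zeta x}-1)(e^x-1)\nu(dx)$, and $\psi_2(\zeta)$, so I would estimate each factor separately and then integrate in $v$ over $[a,\infty)$. Note at the outset that $|K^{-i\zeta+1}|=K^{-\alpha+1}$ and $|S_{t-}^{i\zeta}|=S_{t-}^\alpha$, since $|K^{iv}|=|S_{t-}^{iv}|=1$.

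First I would show that the L\'evy integral is bounded by a constant independent of $v$. Splitting
\[
\int_{\bbR_0}(e^{i\zeta x}-1)(e^x-1)\nu(dx)
=\int_{\bbR_0}e^{i\zeta x}(e^x-1)\nu(dx)-\int_{\bbR_0}(e^x-1)\nu(dx),
\]
I would bound the first term by Lemma~\ref{lem-VG5} and evaluate the second by setting $\zeta=0$ in Lemma~\ref{lem-VG4}, which gives $\int_{\bbR_0}(e^x-1)\nu(dx)=C\log\big(\frac{MG}{(M-1)(G+1)}\big)$. The triangle inequality then yields
\[
\l|\int_{\bbR_0}(e^{i\zeta x}-1)(e^x-1)\nu(dx)\r|
\leq C\l\{\frac{1}{G+\alpha}+\frac{1}{M-\alpha-1}
+\l|\log\l(\frac{MG}{(M-1)(G+1)}\r)\r|\r\}=:D,
\]
a constant that reproduces exactly the bracketed factor appearing in (\ref{eq-prop-VG3-1}).

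Next I would estimate $\int_a^\infty|\psi_2(\zeta)|\,dv$. Since $|(i\zeta-1)i\zeta|=\sqrt{(\alpha-1)^2+v^2}\,\sqrt{\alpha^2+v^2}\geq v^2$, the definition $\psi_2(\zeta)=\frac{\phi_{T-t}(\zeta)S_{t-}^{i\zeta}}{(i\zeta-1)i\zeta}$ together with Proposition~\ref{prop-VG2} gives $|\psi_2(\zeta)|\leq C_2S_{t-}^\alpha|v|^{-2-2C(T-t)}$, whence
\[
\int_a^\infty|\psi_2(\zeta)|\,dv
\leq C_2S_{t-}^\alpha\int_a^\infty v^{-2-2C(T-t)}\,dv
=\frac{C_2S_{t-}^\alpha}{(2C(T-t)+1)\,a^{2C(T-t)+1}}.
\]
Combining this with the constant bound $D$ and the factor $K^{-\alpha+1}$, the left-hand side of (\ref{eq-prop-VG3-2}) is at most $\frac{1}{\pi}K^{-\alpha+1}D\int_a^\infty|\psi_2(\zeta)|\,dv$, i.e.
\[
\frac{D\,C_2\,K^{-\alpha+1}S_{t-}^\alpha}{\pi(2C(T-t)+1)\,a^{2C(T-t)+1}},
\]
and once the definition of $D$ is substituted, the hypothesis (\ref{eq-prop-VG3-1}) is precisely the statement that this quantity does not exceed $\ve$.

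I expect the difficulties to be bookkeeping rather than conceptual. The two points that require care are, first, correctly identifying the $\zeta=0$ value of the L\'evy integral so that the logarithmic constant in (\ref{eq-prop-VG3-1}) is accounted for, and second, retaining the extra $v^{-2}$ decay coming from the denominator $(i\zeta-1)i\zeta$ of $\psi_2$. The latter is essential: the characteristic-function decay $|v|^{-2C(T-t)}$ from Proposition~\ref{prop-VG2} need not by itself be integrable at infinity when $C(T-t)$ is small, so it is the combined exponent $2+2C(T-t)$ that both guarantees convergence and produces the power $a^{2C(T-t)+1}$ after integration.
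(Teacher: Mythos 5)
Your proposal is correct and follows essentially the same route as the paper's own proof: the same splitting of the L\'evy integral with Lemma~\ref{lem-VG5} and the value $\int_{\bbR_0}(e^x-1)\nu(dx)=C\log\big(\frac{MG}{(M-1)(G+1)}\big)$, the same bound $|\psi_2(\zeta)|\leq C_2S_{t-}^\alpha|v|^{-2C(T-t)-2}$ from Proposition~\ref{prop-VG2} together with $|(i\zeta-1)i\zeta|\geq v^2$, and the same final integration producing the power $a^{-2C(T-t)-1}$. The only cosmetic difference is that you justify the logarithmic constant explicitly via Lemma~\ref{lem-VG4} at $\zeta=0$, which the paper leaves implicit.
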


\begin{proof}
By (\ref{eq-lem-VG5}), we have
{\small
\begin{eqnarray}
\lefteqn{\l|\frac{1}{\pi}\int^\infty_aK^{-i\zeta+1}
\int_{\bbR_0}(e^{i\zeta x}-1)(e^x-1)\nu(dx)\psi_2(\zeta)dv\r|} \nonumber \\
&\leq& \frac{1}{\pi}\Bigg\{\l|\int^\infty_aK^{-i\zeta+1}
       \int_{\bbR_0}e^{i\zeta x}(e^x-1)\nu(dx)\psi_2(\zeta)dv\r| \nonumber \\
&&     +\l|\int^\infty_aK^{-i\zeta+1}\int_{\bbR_0}(e^x-1)\nu(dx)
       \psi_2(\zeta)dv\r|\Bigg\} \nonumber \\
&\leq& \frac{1}{\pi}\l\{\int^\infty_a\l|K^{-i\zeta+1}\r|
       \l[\l|\int_{\bbR_0}e^{i\zeta x}(e^x-1)\nu(dx)\r|
       +\l|\int_{\bbR_0}(e^x-1)\nu(dx)\r|\r]\l|\psi_2(\zeta)\r|dv\r\}
       \nonumber \\
&\leq& \frac{1}{\pi}\l\{K^{-\alpha+1}C\l(\frac{1}{G+\alpha}
       +\frac{1}{M-\alpha-1}+\l|\log\l(\frac{MG}{(M-1)(G+1)}\r)\r|\r)
       \int_a^\infty\l|\psi_2(\zeta)\r|dv\r\}. \nonumber \\
\label{eq-prop-VG3-3}
\end{eqnarray}}
Because Proposition~\ref{prop-VG2} implies
\[
\l|\psi_2(\zeta)\r|
=    \l|\frac{\phi_{T-t}(\zeta)S_{t-}^{i\zeta}}{(i\zeta-1)i\zeta}\r|
\leq \frac{1}{v^2}C_2|v|^{-2C(T-t)}S_{t-}^\alpha
=    C_2S_{t-}^\alpha|v|^{-2C(T-t)-2},
\]
we have, together with (\ref{eq-prop-VG3-3}),
{\small
\begin{eqnarray*}
\mbox{R.H.S. of (\ref{eq-prop-VG3-2})}
&\leq& \frac{1}{\pi}C C_2K^{-\alpha+1}S_{t-}^\alpha
       \l[\frac{1}{G+\alpha}+\frac{1}{M-\alpha-1}
       +\l|\log\l(\frac{MG}{(M-1)(G+1)}\r)\r|\r] \\
&&     \times\int_a^\infty|v|^{-2C(T-t)-2}dv \\
&=&    \frac{1}{\pi}C C_2K^{-\alpha+1}S_{t-}^\alpha
       \l[\frac{1}{G+\alpha}+\frac{1}{M-\alpha-1}
       +\l|\log\l(\frac{MG}{(M-1)(G+1)}\r)\r|\r] \\
&&     \times\frac{a^{-2C(T-t)-1}}{2C(T-t)+1}.
\end{eqnarray*}}
\end{proof}

\subsection{Numerical results}
We illustrate our numerical results for a variance gamma model.
Choosing the model parameters as $\kappa=0.15$, $m=-0.2$, and $\delta=0.45$,
which meet the second condition of Assumption~\ref{ass-1},
we compute $LRM_t$ for the same numerical experiments as in Subsection~3.2.
Note that $M>4$ is satisfied.
Moreover, we also take the same parameters related to the FFT as
in Subsection~3.2.
$N\eta$ satisfies (\ref{eq-prop-VG3-1}) for any parameter set.
The results are shown in Fig.~\ref{Fig-VG}.
The computation time to obtain Fig.~\ref{Fig-VG}(b) was 0.19~s.

In addition, we implemented the same type of numerical experiments as the above
based on market data.
We used the Nikkei 225 index for March 2014.
We need to set the log price $L_t:=\log(S_t/S_0)$,
where $S_0$ is the price on 28 February 2014, which was 14841.07.
We estimate the parameters $C$, $G$, and $M$ in Table~\ref{tab1} from
the mean, variance, and skewness of the log price
by using the generalized method of moments and the Levenberg--Marquardt method.

\begin{table}[hbtp]
  \caption{Estimated parameters}
  \label{tab1}
  \begin{center}
    \begin{tabular}{cc}
      \hline 
      C & $2.469395026815120$  \\
      G & $23.743109051760964$ \\
      M & $24.903251787154687$ \\
      \hline
    \end{tabular}
  \end{center}
\end{table}

\noindent
Because $G-M\approx-1.16$, this parameter set satisfies Assumption~\ref{ass-1}.
We take $T=1$ and $S_{t-}=14841.07$, that is, $L_{t-}=0$.
First, fixing the strike price $K=14000$, we compute $LRM_t$
for $t=0, 0.05,\ldots,0.95$.
Next, fixing $t$ to $0.5$, the values of $LRM_{0.5}$ are computed
for $K=10000,11000,\dots,20000$.
Note that $N\eta$ satisfies (\ref{eq-prop-VG3-1}).
The results of the computation are illustrated in Fig.~\ref{Fig-VG2}.

\clearpage
\begin{figure}
\vspace{-2mm}
\centering
\subfigure[Results of the same numerical experiments
           as Fig.~\ref{Fig-Mer}(a) for a variance gamma model
           with parameters $\kappa=0.15$, $m=-0.2$, and $\delta=0.45$.]{
           \includegraphics[width=0.9\textwidth]{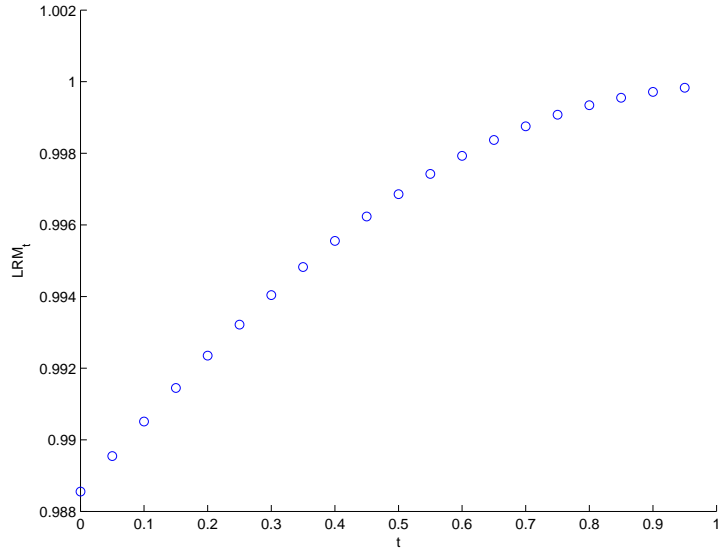}}
\subfigure[Results for the computation as
           Fig.~\ref{Fig-Mer}(b) for the same variance gamma model
           as in (a).]{
           \includegraphics[width=0.9\textwidth]{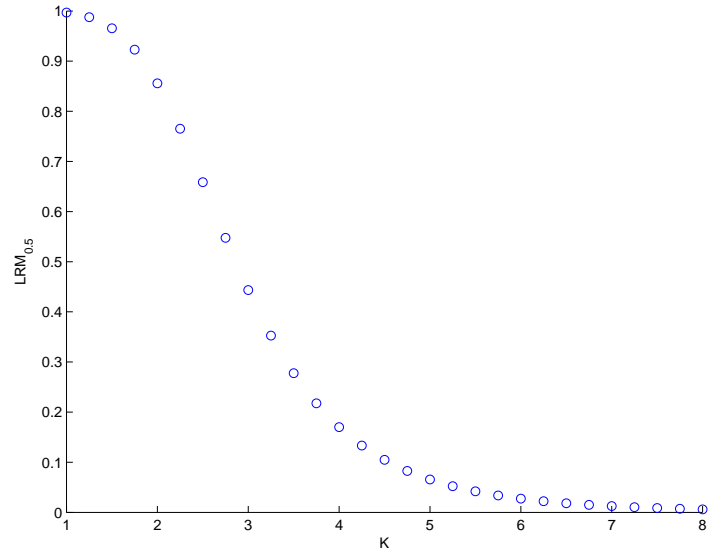}}
\caption{Variance gamma model with parameters $\kappa=0.15$, $m=-0.2$,
         $\delta=0.45$}
\label{Fig-VG}
\end{figure}
\clearpage
\clearpage
\begin{figure}
\vspace{-2mm}
\centering
\subfigure[Values of $LRM_t$ with strike price
           $K=14000$ and $S_{t-}=14841.07$ for $t=0, 0.05,\ldots,0.95$.
           The values of the three parameters $C$, $G$, and $M$ are given
           in Table~\ref{tab1}.]{
           \includegraphics[width=0.9\textwidth]{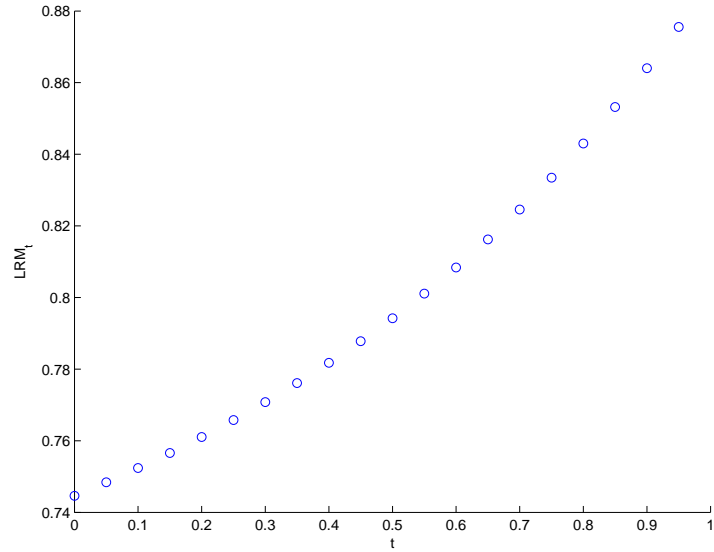}}
\subfigure[$LRM_{0.5}$ for $K=10000,11000,\dots,20000$
           with $S_{0.5}=14841.07$.]{
           \includegraphics[width=0.9\textwidth]{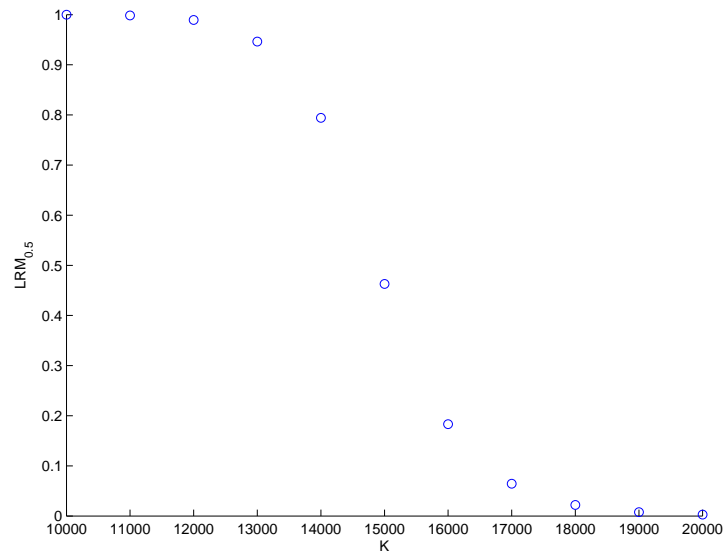}}
\caption{Variance gamma model based on the Nikkei 225 index for March 2014}
\label{Fig-VG2}
\end{figure}
\clearpage

\begin{center}
{\bf Acknowledgements}
\end{center}
Takuji Arai gratefully acknowledges financial support from the
Ishii Memorial Securities Research Promotion Foundation.


\end{document}